\newtheorem{theorem}{Theorem}
\newtheorem{corollary}{Corollary}
\newtheorem{remark}{Remark}
\newtheorem{lemma}{Lemma}
\def\QED{\mbox{\rule[0pt]{1.5ex}{1.5ex}}}
\renewcommand{\qed}{\hfill \QED}
 \newenvironment{proofof}[1]{\vspace*{5mm} \par \noindent
         \quad{\it Proof of #1:\hspace{2mm}}}{\qed
}
\def\FF{\mathbb{F}}
\def\ZZ{\mathbb{Z}}
\def\Label#1{\label{#1}\ [\ \text{#1}\ ]\ }
\def\Label{\label}
\def\mojiparline#1{
    \newcounter{mpl}
    \setcounter{mpl}{#1}
    \@tempdima=\linewidth
    \advance\@tempdima by-\value{mpl}zw
    \addtocounter{mpl}{-1}
    \divide\@tempdima by \value{mpl}
    \advance\kanjiskip by\@tempdima
    \advance\parindent by\@tempdima
}
\begin{document}

\title{Secure network code over one-hop relay network}

\author{Masahito Hayashi \IEEEmembership{Fellow, IEEE}
and Ning Cai \IEEEmembership{Fellow, IEEE}
\thanks{The work of M. Hayashi was supported in part by
the Japan Society of the Promotion of Science (JSPS) Grant-in-Aid 
for Scientific Research (B) Grant 16KT0017 and
for Scientific Research (A) Grant 17H01280 and
for Scientific Research (C) Grant 16K00014,
in part by the Okawa Research Grant, and 
in part by the Kayamori Foundation of Informational Science Advancement.
The material in this paper was presented in part at the 2017 IEEE International Symposium on Information Theory (ISIT 2017),   Aachen (Germany), 25-30 June 2017 \cite{HOKC}.}
\thanks{Masahito Hayashi is with the Graduate School of Mathematics, Nagoya University, Nagoya, 464-8602, Japan. 
He is also with 
Shenzhen Institute for Quantum Science and Engineering, Southern University of Science and Technology,
Shenzhen, 518055, China,
Center for Quantum Computing, Peng Cheng Laboratory, Shenzhen 518000, China,
and the Centre for Quantum Technologies, National University of Singapore, 3 Science Drive 2, 117542, Singapore
(e-mail:masahito@math.nagoya-u.ac.jp).
Ning Cai is with the School of Information Science and Technology, ShanghaiTech University, Middle Huaxia Road no 393,
Pudong, Shanghai  201210, China
(e-mail: ningcai@shanghaitech.edu.cn).} }

\markboth{M. Hayashi 
and N. Cai: Secure network code over one-hop relay network}{}

\maketitle

\begin{abstract}
When there exists a malicious attacker in the network,
we need to consider the possibilities of eavesdropping and the contamination simultaneously.
Under an acyclic broadcast network, the optimality of linear codes was shown 
when Eve is allowed to attack any $r$ edges.
The optimality of linear codes is not shown under a different assumption for Eve.
As a typical example of an acyclic unicast network,
we focus on the one-hop relay network under the single transmission scheme
by assuming that Eve attacks only one edge in each level.
Surprisingly, as a result, we find that 
a non-linear code significantly improves the performance on the one-hop relay network
over linear codes.
That is, a non-liner code realizes the imperfect security on this model that cannot be realized by linear codes.
This kind of superiority of a linear code still holds even with considering the effect of sequential error injection on information leakage.
\end{abstract}

\begin{IEEEkeywords} 
secure network coding,
one-hop relay network,
non-linear code,
passive attack,
active attack
\end{IEEEkeywords}

\section{Introduction}
Security of information transmission over a network is a crucial problem
because there is a risk that 
a part of channels on the network are attacked by a malicious third party.
Secure network coding enables us to guarantee the security even under the existence of such an attacker.
Cai and Yeung \cite{Cai2002,CY,YN} discussed the secrecy for the malicious
adversary, Eve, wiretapping a subset $E_E$ of all channels in the network. 
The papers \cite{CG,RSS,FMSS,NYZ,HY,Cai,CC,AVF16-1,AVF16-2,AVF17} developed several types of secure network coding.
Combining the codes in \cite{Cai2002} and \cite{CY2},
the paper \cite{NY} proposed a linear code to keep the secrecy of the message 
and the robustness from the injection of error (contamination) simultaneously.
Like traditional error-correcting code and error correction network code (i.e. against Byzantine attack) in \cite{YC2,CY2}, 
the paper \cite{NY} considered 
the robustness in the worst case, or equivalently 
it evaluated the error probability when the adversary to inject error knows the message to be sent. 
The papers \cite{Yao2014,SK,KMU} also made similar studies.
Also, the papers \cite{SK,KMU,Matsumoto2011,Matsumoto2011a} showed the existence of a secrecy code that
universally works for any type of eavesdroppers under the size
constraint of $E_E$. 

\begin{figure}[htbp]
\begin{center}
\includegraphics[scale=0.5]{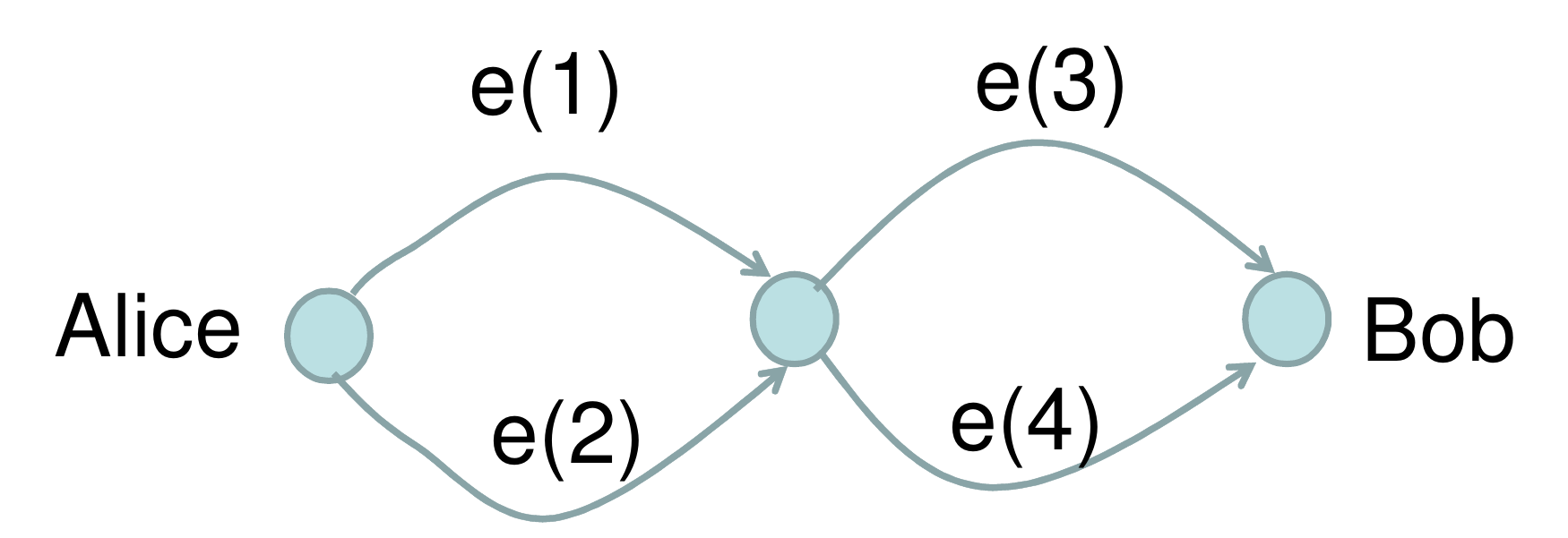}
\end{center}
\caption{one-hop relay network.}
\Label{F1}
\end{figure}%

However, two issues were not sufficiently studied in existing papers.
The first issue is the optimality of linear code.
Wile this optimality was shown in the papers \cite{LYC,CY,CY2}
under a general acyclic network with a single source and multiple receivers
(acyclic broadcast network),
they allowed Eve to attack any $r$ edges in the network (See Table \ref{TCom}).
But in many cases, linear codes are not optimal, or in other words, non-linear code has better performance, for example, in coding for multiple source network and classical error-correcting code (which can be considered as error correction network coding for a two-node network, 
``point-to-point network)''. 
In several known examples of multiple source network, non-linear code can do better than linear one \cite{DFZ}.
Only a limited number of studies \cite{DFZ,CG,YYZ} discussed non-linear codes over networks.
Hence, it has been an open problem whether 
the optimality of linear code still holds for an acyclic broadcast network even under a different assumption for Eve.
That is, if we assume a different type of assumption is imposed for Eve,
there is a possibility that a non-linear code overcomes linear codes.
To resolve this problem,
in this paper, as a typical example for unicast networks,
we focus on the one-hop relay network (Fig. \ref{F1}) 
when the sender sends a single element of a prime field $\FF_p$ 
(the single transmission scheme).
In this network, it is natural to assume that Eve can attack only one edge in each level
because any code is insecure for attacking both edges in one level. 
Hence, assuming that Eve attacks one edge before and after the intermediate node,
we study whether a non-linear code overcomes linear code in this model by adapting the imperfect secure criterion.

In fact, the perfect security is too restrictive for the single transmission scheme.
It is known that there exists an imperfectly secure linear code 
over a finite field $\FF_q$ of sufficiently large prime power $q$
when Eve may access a subset of channels that does not contain a cut between Alice and Bob 
even when the linear code does not employ private randomness in the intermediate nodes \cite{Bhattad}.
Hence, we adopt the imperfectly secure criterion.

 \begin{table*}[!t]
\caption{Summary of comparison with existing results for single transmission scheme of a finite field}
\label{TCom}
\begin{center}
  \begin{tabular}{|l|c|c|c|c|} 
\hline
& type of & type of &  \multirow{2}{*}{Field}   & optimality of\\
& network & attack  &  & linear code\\
\hline
\multirow{2}{*}{Paper \cite{LYC}} 
&  acyclic broadcast    & \multirow{2}{*}{no attack} &  sufficiently  & \multirow{2}{*}{Yes} \\
&  network  &  &  large &  \\
\hline
\multirow{2}{*}{Paper \cite{CY}} 
&  acyclic broadcast    & any $r$ edges &  sufficiently  & \multirow{2}{*}{Yes} \\
&  network  & are eavesdropped &  large &  \\
\hline
\multirow{2}{*}{Paper \cite{CY2}} 
&  acyclic broadcast    & any $r$ edges &  sufficiently  & \multirow{2}{*}{Yes} \\
&  network  & are injected errors &  large &  \\
\hline
\multirow{2}{*}{Examples in \cite{DFZ}} 
&  acyclic multicast  & \multirow{2}{*}{no attack}  &  sufficiently  & \multirow{2}{*}{No} \\
&  network  &  &  large &  \\
\hline
\hline
\multirow{2}{*}{passive attack}& one-hop & 1 edge in each level  & 
\multirow{2}{*}{any finite field}  & \multirow{2}{*}{No} \\
 & relay network & is eavesdropped  &   &  \\
\hline
active attack& one-hop & 1 edge in each level  & prime field  & no secure \\
with $\FF_2$ & relay network & is injected errors  & $\FF_2$  &  code\\
\hline
active attack& one-hop & 1 edge in each level  & finite field $\FF_q$  & \multirow{2}{*}{No} \\
(Characteristic is not $2$)
 & relay network & is injected errors  & ($q \neq 2^l$)  &  \\
\hline
  \end{tabular}
\\
 \end{center}
\vspace{2ex}
Since this table addresses the single transmission scheme on a finite field,
the linearity means the scalar linearity.
The case with the vector linearity was discussed in another recent paper \cite{CH},
which discussed multilayer networks including one-hop relay network.
\end{table*}

The second issue is the effect of Eve's contamination to help to have more information about the message. 
Indeed the existing studies \cite{NY,Yao2014,SK,KMU} evaluated errors 
only when the information on a part of the network is changed,
but they evaluated the secrecy only when the information on the network is not changed\footnote{In other words, 
it seemed to be considered at that time that injecting error might only 
make Eve's wiretapping the message more difficult.} 
or Eve did not know the replaced information.
On the other hand, as another model, we assume that the goal of Eve to inject error is to help to have more information about the message. 
In this case, she may inject error according to the knowledge which she obtained from her previous action but not the message. 
This improvement for her ability of eavesdropping is essential,
and this type of attack is called an {\it active attack}
while the attack without this improvement is called a {\it passive attack}.
The recent paper \cite{HOKC1} discussed this model, i.e., evaluated the secrecy as well as the error when Eve contaminates the eavesdropped information and knows the replaced information. 
Then, it showed that 
Eve's noise injection cannot improve her performance of eavesdropping
for linear network codes.
This kind of discussion over linear network codes was extended to a multilayer network \cite{CH}.
The effects of Eve's contamination depend on the type of network code.
This kind of reduction of security analysis was not shown for non-linear codes.
Therefore, we study the superiority of non-linear codes under both settings, i.e., passive attack and active attack.


First, we show the non-existence of imperfectly secure linear code over this network.
Then, in the binary case, we propose a non-linear code on this network, in which
even if Eve makes a passive attack, she cannot recover the original message, i.e., 
the imperfect security holds.
However, when she makes an active attack before the intermediate node,
she can recover the original message.
Similar unexpected properties for a nonlinear network error-correcting code were reported in \cite{YYZ}.
Also,
we show that the network code is limited to this example
when we impose several natural secrecy conditions on the code over this network in the binary case.
This discussion shows that no code can guarantee the security over this type of active attack 
on the one-hop relay network in the binary case.
However, in the ternary case, there exists a code such that Eve cannot completely recover the message even with this type of active attack on the one-hop relay network.
To discuss this problem, we introduce a new concept an ``anti-Latin square'', which is an opposite concept to a Latin square.
Therefore, the goal of this paper is summarized as the derivation of the following three facts over the one-hop relay network.
(1) No linear code is imperfectly secure. 
(2) In the binary case, there exists an imperfectly secure non-linear code for passive attacks, 
but no non-linear code is imperfectly secure for active attack
(3) In the ternary case, there exists an imperfectly secure non-linear code even for active attacks.
The relation with existing studies are summarized in Table \ref{TCom}.

Indeed, the assumption for the single transmission scheme is crucial.
The linearity of this setting is called the scalar linearity
In contrast, the linear setting is called the vector linearity when the sender sends multiple elements of a finite field\footnote{In fact, the paper \cite{Bhattad} discussed this kind of imperfectly secure code under scalar linearity 
with a different type of network while it chooses large $q$. 
In contrast, the paper \cite{SK} discussed a similar imperfectly secure code construction
by increasing the number of transmitted elements of the finite field (the vector linearity) 
while it did not increase the size of $q$.
The paper \cite{CCMG} extended this type of vector linearity setting 
of imperfectly secure codes to the case with multi-source multicast.}.
Our recent paper \cite{CH} discussed multilayer networks including the one-hop relay network.
However, it addressed the case when multiple letters are transmitted, in which, the linearity means vector linearity.
Then, the optimality of vector linearity was shown in such a setting.

Here, we remark our setting for the intermediate node.
In our setting, only the source node is allowed to employ scramble random numbers
and the intermediate node is not allowed to use it
due to the following reason\footnote{
Since stochastic encoder is essential for secure information transmission, 
we need to be careful for the availability of private randomness in each node.
However, a stochastic decoder cannot improve the performance even for 
secure information transmission due to the following reason.
The decoding error probability preserves the convex combination for the decoder.
Hence, the decoding error probability of 
a stochastic decoder can be written as a convex combination of 
the decoding error probabilities of 
deterministic decoders.}.
A physical device is required for 
the realization of physical random numbers as scramble random numbers.
If a high quality physical random number is needed,
the physical device is expensive and/or 
consumes a non-negligible space because
it often needs high-level quantum information technologies with advanced security analysis \cite{HG,HZ}.
It might be possible to prepare such devices on the source side.
However, it increases the cost to prepare devices in the intermediate nodes
because networks with such devices require more complicated maintenance than a conventional network.
Therefore, from the economical reason, it is natural to impose this constraint to our network code
while only a few papers \cite{CY-07,CHK-10,CHK-13,CH} discussed this type of restriction.

The remaining of this paper is organized as follows.
Section \ref{S4-1} gives our formulation over the one-hop relay network.
Next, Section \ref{S4-2} shows the non-existence of imperfectly secure linear code over this network.
Then, Section \ref{S3} discusses the binary case.
Finally, Section \ref{S3-5} studies the case when message size is a general number $d$,
which includes the ternary case.

\section{Formulation}\Label{S4-1}
In this paper, we focus on the imperfect security, i.e.,
the property that Eve cannot recover the original message with probability one \cite{CN11,CN11b}.
To cover a large possibility, we consider the two cases.
In one case, the message ${\cal X}$ set is a finite field $\FF_q$.
In the other case, it is a quotient ring $\ZZ_d$. 
The case  of a quotient ring $\ZZ_d$ contains the case of a prime field $\FF_p$,
which is a special case of a finite field$\FF_q$.

Then, we consider the secure codes over the one-hop relay network given in Fig. \ref{F1} with 
edges $E=\{e(1),e(2),e(3),e(4)\}$
only in the single transmission scheme, i.e., the case when
the sender sends only one element of ${\cal X}$.
The linearity under this setting is called the scalar linearity 
when ${\cal X}$ is a field $\FF_q$ \cite[Section I]{DFZ}.
That is, we consider the transmission of the message $M$ in ${\cal X}$ by using the one-hop relay network given in Fig. \ref{F1}
when the information at the edges is given as an element of ${\cal X}$.
Here, $d$ is an arbitrary natural number, and it is not necessarily a prime number.
Also, private randomness is allowed in the sender, but
no private randomness is allowed in the intermediate nodes.
Therefore, 
we are allowed to choose an arbitrary deterministic function
$\varphi$ from ${\cal X}^2$ to ${\cal X}^2$
as our coding operation on the intermediate node.
Our encoder in the source node is given as a stochastic map $\phi$
from ${\cal X}$ to ${\cal X}^2$,
and our decoder is given as a deterministic function
$\psi$ from ${\cal X}^2$ to ${\cal X}$.
Eve is allowed to attack the subset $E_E$ of two edges of $E$ except for the pairs $\{e(1),e(2)\}$ and $\{e(3),e(4)\}$.
In this section, 
we call the triplet $(\phi,\varphi,\psi)$ a code over the one-hop relay network (Fig. \ref{F1}).

We have two attack models, the {\it passive attack} and the {\it active attack}.
In the passive attack, Eve can eavesdrop two edges, but cannot change the information on the attacked edge.
In the active attack, 
Eve can insert another information on the attacked edge in the first group $\{e(1),e(2)\}$,
and eavesdrop one edge in the second group $\{e(3),e(4)\}$.
Here, Eve cannot change the edge to be attacked by using the information 
on the attacked edge in the first group $\{e(1),e(2)\}$.
When a code satisfies the following two conditions in the respective models,
the code is called imperfectly secure in the respective models.
Otherwise, it is called insecure in the respective models.
In the following conditions, the information on the edge $e(i)$ is written as $Y_i$.

\begin{description}
\item[(B1)] (Recoverability) 
Bob can recover the message $M$ 
from $Y_3 $ and $Y_4$ when Eve does not make any replacement.

\item[(B2)] (Secrecy)
No active attack $\tilde{\psi}$ from ${\cal X}^3$ to ${\cal X}$ 
satisfies one of the following conditions.
\begin{align}
\tilde{\psi}(Y_1,Y_1',Y_3)=M,\quad
\tilde{\psi}(Y_1,Y_1',Y_4)=M, \\
\tilde{\psi}(Y_2,Y_2',Y_3)=M,\quad
\tilde{\psi}(Y_2,Y_2',Y_4)=M,
\end{align}
where $Y_1'$ and $Y_2'$ are the information replaced by Eve at the edges $e(1)$ and $e(2)$,
and $Y_3$ and $Y_4$ are the information at the edges $e(3)$ and $e(4)$.
This kind of secrecy is called imperfect security \cite{CN11,CN11b}.
\end{description}

\section{Linear code}\Label{S4-2}
First, we address linear codes over the one-hop relay network
only when the message set ${\cal X}$ is a finite field $\FF_q$.
As shown in the recent paper \cite{HOKC1},
Eve's noise injection cannot improve her performance of eavesdropping
for linear network codes.
Hence, it is sufficient for analysis of information leakage
to discuss the case when Eve does not contaminate the eavesdropped information.
Since the message is an element of $\FF_q$,
the linearity in this problem can be regarded as scalar linearity \cite[Section I]{DFZ}.
The following theorem shows the impossibility of secure communication by using linear codes.

\begin{theorem}\Label{TT5}
There is no secure linear code even for passive attack 
when the message set ${\cal X}$ is a finite field $\FF_q$.
\end{theorem}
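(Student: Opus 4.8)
The plan is to show that any linear code over $\FF_q$ that satisfies recoverability (B1) must necessarily leak the message to Eve in at least one of her allowed eavesdropping positions, so that (B2) fails. First I would write down the most general scalar-linear code: the source encoder sends $Y_1 = aM + bR$ and $Y_2 = cM + dR$ with coefficients $a,b,c,d \in \FF_q$ and private randomness $R \in \FF_q$ (a single scramble variable suffices in the single-transmission scheme since the edge alphabet is $\FF_q$), and the intermediate node applies a linear map $\varphi$ producing $Y_3 = \alpha_1 Y_1 + \alpha_2 Y_2$ and $Y_4 = \beta_1 Y_1 + \beta_2 Y_2$. The decoder $\psi$ is linear, $\psi(Y_3,Y_4) = s Y_3 + t Y_4$.

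Next I would extract what recoverability forces. For $\psi$ to return $M$ for every $M$ and every value of $R$, the composite map from $(M,R)$ to $sY_3+tY_4$ must have $M$-coefficient $1$ and $R$-coefficient $0$. Writing $Y_3 = (\alpha_1 a + \alpha_2 c)M + (\alpha_1 b + \alpha_2 d)R$ and similarly for $Y_4$, this gives two linear constraints on the eight parameters. The key structural consequence I want to tease out is that $R$ must genuinely appear somewhere before the intermediate node — otherwise $Y_1$ or $Y_2$ already determines $M$ — and that the vectors $(a,b)$ and $(c,d)$, viewed in $\FF_q^2$, together with the recoverability constraint, cannot all be arranged so that every individual edge $e(1),\dots,e(4)$ fails to determine $M$.

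The heart of the argument is then a case analysis on which edges carry $M$ "cleanly." Eve's allowed attacks pair one edge from $\{e(1),e(2)\}$ with one edge from $\{e(3),e(4)\}$; she wins (in the passive sense) if from that pair she can solve for $M$. If edge $e(1)$ alone determines $M$ (i.e. $b=0$, $a\neq 0$), she wins immediately by reading $e(1)$ and any second edge. So assume neither $e(1)$ nor $e(2)$ alone determines $M$, i.e. $b\neq 0$ and $d\neq 0$; then each of $Y_1,Y_2$ determines $R$ given $M$ and vice versa, and from the pair $(Y_1,Y_2)$ one recovers $(M,R)$ unless $(a,b)$ and $(c,d)$ are proportional. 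If they are proportional, say $(c,d)=\lambda(a,b)$, then $Y_2 = \lambda Y_1$, so $Y_3$ and $Y_4$ are both scalar multiples of $Y_1$; recoverability then forces one of them to be a nonzero multiple of $Y_1$ carrying $M$ with a nonzero coefficient but — since $Y_1 = aM+bR$ with $b\neq0$ — reading that downstream edge does not give $M$ by itself, yet reading it together with $Y_1$ does, and Eve is allowed exactly such a pair (one upstream, one downstream). The remaining case is $(a,b),(c,d)$ linearly independent: then $(Y_1,Y_2)\mapsto(M,R)$ is invertible, so $Y_3,Y_4$ are arbitrary linear combinations; I would check that recoverability ($sY_3+tY_4 = M$) forces $(Y_3,Y_4)$ jointly to determine $M$, hence at least one of them, combined with a suitable upstream edge — or the two downstream edges are not both available to Eve, but one downstream edge plus the upstream edge feeding it non-trivially is — leaks $M$. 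Collecting the cases shows no choice of coefficients avoids leakage.

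The main obstacle I anticipate is the bookkeeping in the last case: making sure that "recoverability from $(Y_3,Y_4)$" genuinely propagates down to "some admissible two-edge subset of Eve's determines $M$," since Eve never gets both $e(3)$ and $e(4)$ simultaneously. The resolution is to note that if $sY_3 + tY_4 = M$ with, say, $s\neq 0$, then $Y_3$ determines $M$ once $Y_4$ is known; but $Y_4$ is a linear combination of $Y_1,Y_2$, and Eve can read one of $e(1),e(2)$ — so one must track which single upstream edge, together with $e(3)$, pins down the value of $Y_4$ and hence of $M$. This reduces to a rank computation on $2\times 2$ matrices over $\FF_q$ and holds for every finite field, which is exactly the claimed universality. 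I would organize the whole proof as: (i) parametrize, (ii) impose (B1), (iii) the three-way case split above, (iv) in each case exhibit an explicit admissible Eve strategy $\tilde\psi$ contradicting (B2).
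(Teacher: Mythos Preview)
Your plan parametrizes the entire code (encoder, intermediate map, decoder) and case-splits on the encoder coefficients. This can be made to work, but there is a concrete error in your ``proportional'' case: if $(c,d)=\lambda(a,b)$ with $b\neq 0$, then $Y_2=\lambda Y_1$ and hence $Y_3,Y_4$ are both scalar multiples of $Y_1=aM+bR$. Since $b\neq 0$, no combination of $Y_1,Y_3,Y_4$ determines $M$ at all --- so Bob cannot decode and (B1) already fails. Your claim that ``reading it together with $Y_1$ does'' recover $M$ is false there; the case is disposed of by failure of recoverability, not by an Eve attack. (Relatedly, your inference ``neither $e(1)$ nor $e(2)$ alone determines $M$, i.e.\ $b\neq 0$ and $d\neq 0$'' overlooks the degenerate possibility $a=b=0$.)

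More to the point, the paper's proof is much shorter because it ignores the encoder entirely. It looks only at the intermediate-node matrix $A$ with $Y_3=A_{1,1}Y_1+A_{1,2}Y_2$ and $Y_4=A_{2,1}Y_1+A_{2,2}Y_2$, and shows that from some admissible pair Eve can always reconstruct $(Y_1,Y_2)$, hence $(Y_3,Y_4)$, hence whatever Bob sees; since Bob decodes $M$, so does Eve. The case split is on the first row $(A_{1,1},A_{1,2})$ alone: if it is $(0,0)$ then $Y_3$ is useless and Eve reads $e(4)$; if it is a nonzero multiple of $(1,0)$ then Eve reads $\{e(2),e(3)\}$ and inverts $Y_3=A_{1,1}Y_1$ to get $Y_1$; symmetrically for $(0,1)$; otherwise both entries are nonzero, Eve reads $\{e(1),e(3)\}$ and solves $Y_3=A_{1,1}Y_1+A_{1,2}Y_2$ for $Y_2$. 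This sidesteps all of your bookkeeping on the encoder parameters $a,b,c,d,s,t$, and the ``main obstacle'' you flag in the linearly-independent case --- tracking which upstream edge together with $e(3)$ pins down $Y_4$ --- never arises, because Eve recovers the full pair $(Y_1,Y_2)$ directly rather than chasing $M$ through the composite map.
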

\begin{proof}
Due to the linearity of our code, 
we can choose 
a $2 \times 2$ matrix $A$ on $\FF_p$ such that
the relations $ Y_3= A_{1,1}Y_1 +A_{1,2}Y_2$
and
$ Y_4= A_{2,1}Y_1 +A_{2,2}Y_2$
holds when there is no attack.

First, we assume that $(A_{1,1},A_{1,2})$ is $(0,0)$.
Eve can obtain all the information Bob obtains when Eve eavesdrops $e(4)$.
Since Bob can recover the message, Eve can also do it.

Second, we assume that the vector $(A_{1,1},A_{1,2})\neq (0,0)$ is a constant times of $(1,0)$.
When Eve eavesdrops $e(2)$ and $e(3)$, 
Eve can obtain the information of $Y_1$ and $Y_2$, which contains the information that Bob obtains.
Since Bob can recover the message, Eve can also do it.
We obtain the same conclusion when $(A_{1,1},A_{1,2})\neq (0,0)$ is 
a constant times of $(0,1)$ and Eve eavesdrops $e(1)$ and $e(3)$.

Third, we assume that the vector $(A_{1,1},A_{1,2})\neq (0,0)$ is not a constant times 
of $(1,0)$ nor $(1,0)$.
When Eve eavesdrops $e(3)$ and $e(1)$,
Eve can recover the information $Y_4$ from $Y_3$ and $Y_1$.
Hence, Eve obtains all information that the receiver gets.
\end{proof}

\section{Analysis with $\FF_2$}\Label{S3}

The aim of this section is to prove the following theorem.
\begin{theorem}\Label{TT6}
When the message set ${\cal X}$ is the finite field $\FF_q$
and $q$ is a power of $2$,
there exists a imperfectly secure linear code for passive attack.
However, when the message set ${\cal X}$ is the prime field $\FF_2$,
there exists no imperfectly secure linear code for active attack.
\end{theorem}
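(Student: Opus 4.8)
The plan is to treat the two assertions separately. For the positive part I would exhibit a code that is linear over the prime field $\FF_2$ (so ``linear'' holds, though it will \emph{not} be $\FF_q$-linear, which is why it does not contradict Theorem~\ref{TT5}). Write $q=2^l$ with $l\ge 2$ and fix an $\FF_2$-linear map $\pi\colon\FF_q\to\FF_q$ such that both $\pi$ and $\pi':=\pi+\mathrm{id}$ have nontrivial kernel; for instance take $\pi$ to be the $\FF_2$-linear projection onto a one-dimensional subspace along a complementary hyperplane, so that $\pi'$ is the projection onto that hyperplane and $\ker\pi,\ker\pi'\neq\{0\}$ since $l\ge 2$. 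The code is: the source draws $R$ uniformly from $\FF_q$ and puts $Y_1=M+R$, $Y_2=R$; the intermediate node, which recovers $M=Y_1+Y_2$ and $R=Y_2$, outputs $Y_3=\pi(M)+R$ and $Y_4=\pi'(M)+R$; the decoder returns $\psi(Y_3,Y_4)=Y_3+Y_4$. Since the characteristic is $2$ and $\pi+\pi'=\mathrm{id}$, we have $Y_3+Y_4=\pi(M)+\pi'(M)=M$, so (B1) holds, and $\phi,\varphi,\psi$ are all $\FF_2$-linear.

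Next I would verify secrecy against passive attack, i.e. the passive analogue of (B2): no deterministic $\tilde\psi\colon{\cal X}^2\to{\cal X}$ recovers $M$ from the two eavesdropped edges. Going through the four admissible patterns $\{e(1),e(3)\},\{e(1),e(4)\},\{e(2),e(3)\},\{e(2),e(4)\}$: in characteristic $2$ one has $Y_3+Y_1=\pi'(M)$, $Y_4+Y_1=\pi(M)$, $Y_3+Y_2=\pi(M)$, $Y_4+Y_2=\pi'(M)$, so Eve's two observations determine exactly one of $\pi(M),\pi'(M)$, while the remaining raw value she holds ($M+R$ or $R$) merely pins down $R$ once a candidate $M$ is chosen. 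Hence the set of messages consistent with her view is a full coset of $\ker\pi$ or of $\ker\pi'$, of cardinality at least $2$, so no $\tilde\psi$ can output $M$ for all $(M,R)$; this is exactly imperfect security for passive attack.

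For the negative part, with ${\cal X}=\FF_2$ the word ``linear'' just means $\FF_2$-linearity. The key observation is that an active adversary who attacks one first-level edge may decline to change it — take $Y_1'=Y_1$ (resp. $Y_2'=Y_2$) — and then run any passive recovery map on the two edges she reads; hence a code that is not imperfectly secure for passive attack cannot be imperfectly secure for active attack. By Theorem~\ref{TT5} no linear code over $\FF_q$, in particular none over $\FF_2$, is imperfectly secure for passive attack, and the conclusion follows.

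I expect the substance of the proof to lie in the construction: one must recognise that relaxing $\FF_q$-linearity to $\FF_2$-linearity is what circumvents Theorem~\ref{TT5}, and then pick $\pi$ so that $\pi$ and $\pi+\mathrm{id}$ are simultaneously non-injective while $\pi+\pi'=\mathrm{id}$, the latter giving Bob the clean recovery $M=Y_3+Y_4$ and the former leaving Eve a whole coset of candidate messages for every admissible pair of edges. Once $\pi$ is fixed the four case checks are routine, and the active-attack half is a one-line reduction to Theorem~\ref{TT5}.
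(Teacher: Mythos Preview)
The theorem as printed contains a typo: both occurrences of ``linear'' should read ``non-linear'' (or simply ``code'' in the second clause). This is unambiguous from the section heading, the paper's own proof (which opens by introducing a \emph{non-linear} operation at the intermediate node), the conclusion table, and the fact that Theorem~\ref{TT5} already rules out every scalar-linear code for passive attack over any~$\FF_q$. You have tried to rescue the literal wording by reinterpreting ``linear'' as $\FF_2$-linear rather than $\FF_q$-linear; but the paper is explicit that in the single-transmission setting ``the linearity means the scalar linearity,'' so this is not the intended reading.

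Under the intended reading, your argument has two real gaps. For the positive part, your construction needs $l\ge 2$ and collapses at $q=2$: over $\FF_2$ no $\FF_2$-linear $\pi$ has both $\pi$ and $\pi+\mathrm{id}$ non-injective. The paper handles $q=2$ with the genuinely non-linear (multiplicative) map $Y_3=Y_1(Y_2+1)$, $Y_4=(Y_1+1)Y_2$, verifies imperfect security directly, and then covers $q=2^l$ by $l$-fold repetition of this base code. Your $\FF_2$-linear scheme is a valid alternative for $l\ge 2$ (and, in the paper's scalar sense, it \emph{is} non-linear), but it cannot replace the base case. For the negative part, your one-line reduction to Theorem~\ref{TT5} only excludes (scalar-)linear codes. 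The paper's actual claim is that \emph{no} code over $\FF_2$ --- linear or not --- is imperfectly secure against active attack; this is the substantive half of the theorem and is proved via Lemma~\ref{T6}, which classifies (up to bit-flips on the four edges) every code satisfying (C1)--(C2) and shows they all reduce to the single explicit example, which is then broken by the active attacks (i) and (ii) in Subsection~\ref{S3-1}. Your reduction does not address any of this.
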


\subsection{Imperfectly secure non-linear code for passive attack}\Label{S3-1}
First, to show Theorem \ref{TT6},
we give a special example of our code on the prime field $\FF_2$, in which,
the intermediate node performs a non-linear operation as
\begin{align}
Y_3&:= Y_1(Y_2+Y_1)=Y_1(Y_2+1), \Label{Eq1}\\
Y_4&:=(Y_1+1)(Y_2+Y_1)
=(Y_1+1)Y_2. \Label{Eq2}
\end{align}

To send the binary information $M \in \FF_2$,
we prepare the binary uniform scramble random variable $L\in \FF_2$.
We consider the following code.
The encoder $\phi$ is given as 
\begin{align}
Y_1:=L, \quad
Y_2:=M+L. \Label{Eq3}
\end{align}
The decoder $\psi$ is given as $\psi(Y_3,Y_4):=Y_3+Y_4$.
Since $Y_3$ and $Y_4$ are given as follows under this code;
\begin{align}
Y_3= LM, \quad
Y_4=LM+M,
\end{align}
the decoder can recover $M$ whatever the value of $L$.

Now, we consider the leaked information for the passive attack.
As shown in Appendix,
the mutual information and the $l_1$ norm security measure
of these cases are calculated as
\begin{align}
 & I(M;Y_1,Y_3)= I(M;Y_1,Y_4) \nonumber \\
  =& I(M;Y_2,Y_3)=  I(M;Y_2,Y_4)=\frac{1}{2},\Label{E9}\\
 & d_1(M|Y_1,Y_3)= d_1(M|Y_1,Y_4) \nonumber \\
  =& d_1(M|Y_2,Y_3)=  d_1(M|Y_2,Y_4)=\frac{1}{2},\Label{F10}
\end{align}
where the $l_1$ norm security measure $d_1(X|Y)$ is defined as
$d_1(X|Y):=\sum_y \sum_{x}
|\frac{1}{|{\cal X}|}P_Y(y) -P_{XY}(xy)|$
by using the cardinality $|{\cal X}|$ of the set of outcomes of the variable $X$.
In this section, we choose $2$ as the base of the logarithm.
The values in the above relations does not depend on the choice of 
the pair of the eavesdropping edges.
When Eve is allowed to use the above passive attack, \eqref{E9} shows that the code is secure in the sense of (B1).
Therefore, we obtain the first part of Theorem \ref{TT6} when $q=2$.
When $q$ is a power $2^l$ of $2$,
$l$ repetitions of the above non-linear code on the prime field $\FF_2$
give a an imperfectly secure non-linear code on the finite filed $\FF_q$.
Hence, we obtain the first part of Theorem \ref{TT6} even when $q$
is a power of $2$.

Next, we consider the active attack, which is classified into the following three cases.
\begin{description}
\item[(i)]
When $E_E=\{e(1),e(3)\}$,
Eve replaces $Y_1$ by $1$. Then, $I(M;Y_1,Y_3)=1 $ because $Y_3+Y_1+1=M$.

\item[(ii)]
When $E_E=\{e(1),e(4)\}$,
Eve replaces $Y_1$ by $0$. Then, $I(M;Y_1,Y_4)=1 $ because $Y_4+Y_1=M$.

\item[(iii)]
When $E_E=\{e(2),e(3)\}$ or $\{e(2),e(4)\}$,
Eve has no advantageous active attack.
\end{description}
The cases (i) and (ii) show that this code is insecure under the above active attack.
Hence, this example can be regarded as a counterexample of the existing result \cite{HOKC1}[Theorem 1] without linearity.



\begin{remark}
As another encoder, we can consider 
\begin{align}
Y_1:=M+L, \quad
Y_2:=L.
\end{align}
Replacing $M+L$ by $L$, 
the analysis can be reduced to the presented analysis.
When the message $M$ is not leaked to $e(1)$ or $e(2)$ 
and $M$ can be recovered,
the code is essentially the same as our code as follows.

Assume that the information $Y_2$ on $e(2)$ is independent of $M$.
Then, we denote it by $L$.
In order that $Y_1$ is independent and $M$ can be recovered from $Y_1$ and $L$,
$Y_1$ needs to be $M+L$ or $M+L+1$.
\end{remark}

In this model, Eve can completely contaminate the message $M$ as follows.
When Eve takes choice (i), and replaces $Y_3$ by $Y_3+1$,
Bob's decoded message is $M+1$.
Under choice (ii), Eve can totally contaminate the message $M$ in a similar way.

\subsection{Uniqueness of network code given in \eqref{Eq1} and \eqref{Eq2}}\label{S2-5}
The previous subsubsection provided an example where Eve's active attack improves her performance.
To show the second part of Theorem \ref{TT6}, 
we need to show the following lemma.


\begin{lemma}\Label{T6}
Assume that a code $(\phi,\varphi,\psi)$ satisfies the following conditions.
Let $Y_1$ and $Y_2$ be the random variable generated by the encoder 
$\phi$ when $M$ is subject to the uniform distribution.
We assume that the random variables $(Y_3,Y_4):=\varphi (Y_1,Y_2)$
satisfies the following conditions.
\begin{description}
\item[(C1)] The relation $\psi (Y_3,Y_4)=M$ holds.
\item[(C2)] There is no deterministic function $\tilde{\psi}$
from $\FF_2^2$ to $\FF_2$ satisfying one of the following conditions.
\begin{align}
\tilde{\psi}(Y_1,Y_3)=M,\quad
\tilde{\psi}(Y_1,Y_4)=M, \\
\tilde{\psi}(Y_2,Y_3)=M,\quad
\tilde{\psi}(Y_2,Y_4)=M.
\end{align}
\end{description}
Then, there exist functions
$f_1,f_2,f_3,f_4$ on $\FF_2$ such that
$Y_i':=f_i(Y_i)$ is given in \eqref{Eq1}, \eqref{Eq2}, and \eqref{Eq3}
with a scramble random variable $L$ while the variable $L$ might be correlated with $M$.
\end{lemma}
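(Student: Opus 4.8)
The plan is to exploit conditions (C1) and (C2) to pin down, up to relabeling of the alphabet at each edge, both the encoder $\phi$ and the intermediate map $\varphi$. Since $\FF_2$ has only two elements, ``a function on $\FF_2$'' is either the identity or the flip $x\mapsto x+1$ (the two constant functions are ruled out below because they would destroy recoverability), so the functions $f_1,\dots,f_4$ we must produce have essentially no freedom: the content of the lemma is that after an optional bit-flip on each wire the code is forced to be exactly \eqref{Eq1}--\eqref{Eq3}.

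First I would analyze the encoder. Let $Y_1,Y_2$ be the bits produced by $\phi$ on input $M$ uniform on $\FF_2$. Condition (C2) with the pair $(Y_1,Y_3)$ — or more simply the sub-condition that $\tilde\psi(Y_1)=M$ is impossible — forces $Y_1$ alone to carry no information about $M$ (otherwise a deterministic guess from $Y_1$ would sometimes equal $M$ deterministically; here one must argue at the level of the joint distribution, using that (C2) forbids \emph{any} deterministic $\tilde\psi$ from succeeding with the required certainty). Hence we may write $Y_1=L$ for some bit $L$ (possibly correlated with $M$, as the statement allows), and symmetrically $Y_2$ must be independent of $M$ in the same sense. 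Combining this with (C1) — $M$ must be recoverable from $(Y_3,Y_4)=\varphi(Y_1,Y_2)$, hence a fortiori the pair $(Y_1,Y_2)$ must determine $M$ — the only possibilities for $Y_2$ given $Y_1=L$ are $Y_2=M+L$ or $Y_2=M+L+1$; the second is converted to the first by the flip $f_2(y)=y+1$. This reproduces \eqref{Eq3}. (This is exactly the content of the Remark following the example, so that paragraph does most of the work.)

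Next I would analyze $\varphi$. Writing $(Y_1,Y_2)=(L,M+L)$, as $(M,L)$ ranges over its support the pair $(Y_1,Y_2)$ takes the four values $(0,0),(0,1),(1,1),(1,0)$ corresponding to $(M,L)=(0,0),(1,0),(0,1),(1,1)$. The map $\varphi$ assigns to each of these four inputs a pair $(Y_3,Y_4)\in\FF_2^2$. Condition (C1) says $Y_3+$ (something) recovers $M$, i.e. there is a decoder $\psi$ with $\psi(\varphi(L,M+L))=M$; so the four outputs split into the two $M=0$ inputs mapping into $\psi^{-1}(0)$ and the two $M=1$ inputs mapping into $\psi^{-1}(1)$. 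Condition (C2) applied to the pairs $(Y_1,Y_3)$, $(Y_1,Y_4)$, $(Y_2,Y_3)$, $(Y_2,Y_4)$ says that none of the four individual wires $Y_3,Y_4$, when combined with one of $Y_1,Y_2$, reveals $M$; equivalently, for each such pair and each of its two observed values, both values of $M$ must remain possible. I would turn this into a small combinatorial constraint on the $4\times 2$ table of values of $(Y_3,Y_4)$ indexed by $(M,L)$ and simply enumerate: the non-revealing conditions on $(Y_1,Y_3)$ and $(Y_2,Y_3)$ force the column $Y_3$ to be a balanced but ``twisted'' function of $(M,L)$ — precisely $Y_3=L\cdot M$ up to a flip — and likewise $Y_4=(L+1)M$ up to a flip, and then $\psi(Y_3,Y_4)=Y_3+Y_4$ recovers $M$, matching \eqref{Eq1}--\eqref{Eq2}. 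The flips needed to normalize $Y_3$ and $Y_4$ are the functions $f_3,f_4$.

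The main obstacle I anticipate is the first step: rigorously extracting ``$Y_1$ is independent of $M$'' from condition (C2). Condition (C2) forbids a deterministic function $\tilde\psi$ with $\tilde\psi(Y_1,Y_3)=M$ \emph{almost surely} (with respect to the code's own distribution), which is weaker than full statistical independence; one has to check that over $\FF_2$, with $M$ uniform, the only way no deterministic predictor from $Y_1$ alone can be forced is genuine independence of the relevant marginal — and to handle the edge case where $Y_1$ might be a constant, which is excluded because then $(Y_1,Y_2)$ could not determine $M$ unless $Y_2$ does, contradicting the symmetric statement for $Y_2$. Once the independence structure of the encoder is nailed down, the rest is a finite case check over the $2^{\,4\cdot 2}$ candidate tables for $\varphi$ pruned by (C1) and (C2), which is routine.
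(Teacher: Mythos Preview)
Your encoder analysis has a genuine gap. The inference ``$Y_1$ does not deterministically determine $M$, $Y_2$ does not deterministically determine $M$, and $(Y_1,Y_2)$ determines $M$; therefore $Y_2\in\{M+Y_1,\,M+Y_1+1\}$'' is false over $\FF_2$. Take an encoder with $M=Y_1Y_2$: for $M=1$ it outputs $(1,1)$, and for $M=0$ it randomizes over $\{(0,0),(0,1),(1,0)\}$ with all three having positive probability. Then $(Y_1,Y_2)$ determines $M$, yet neither coordinate alone does (for instance $Y_1=1$ is compatible with $M=0$ via $(1,0)$ and with $M=1$ via $(1,1)$), so the sub-condition of (C2) you invoke is satisfied for $i=1,2$. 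Nevertheless $Y_2$ is not even a function of $(M,Y_1)$ (when $(M,Y_1)=(0,0)$ both $Y_2=0$ and $Y_2=1$ occur), let alone of the form $M+Y_1$ or $M+Y_1+1$. Your hoped-for resolution in the last paragraph---that with $M$ uniform on $\FF_2$ the only way no deterministic predictor from $Y_1$ alone can succeed is genuine independence---fails on exactly this example, since $P(Y_1=1\mid M=1)=1\neq P(Y_1=1\mid M=0)$. The Remark you cite \emph{assumes} statistical independence of $Y_i$ and $M$; it does not derive it from (C2).

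The paper's proof avoids this trap by classifying not the encoder but the composite recovery map $f:=\psi\circ\varphi$ (which exists by (C1): $M=f(Y_1,Y_2)$) among all sixteen Boolean functions of two bits. After discarding the functions depending on at most one input (these are insecure or fail recoverability) and absorbing bit-flips, only two essential cases remain: $M=Y_1+Y_2$ and $M=Y_1Y_2$. In the first case the paper shows (C2) forces $Y_3,Y_4$ to be nonlinear and then pins down the unique decodable pair up to flips, namely \eqref{Eq1}--\eqref{Eq2}; this matches your second step. The second case---precisely the one your argument overlooks---is eliminated by a separate check: (C2) again forces $Y_3,Y_4$ nonlinear and forbids $Y_1Y_2$ itself, and among the remaining three nonlinear functions $(Y_1{+}1)(Y_2{+}1)$, $(Y_1{+}1)Y_2$, $Y_1(Y_2{+}1)$ every pair coincides on two inputs with different values of $M=Y_1Y_2$, so (C1) fails. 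Thus the $M=Y_1Y_2$ branch dies at the level of $\varphi$, not at the level of $\phi$; your outline needs an analogous Step~3.
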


Since 
the number of edges to be attacked
is the same as the transmission rate from Alice to Bob,
no linear code works in this scheme.
Hence, we need to introduce a non-linear coding operation in the intermediate node. 
Lemma \ref{T6} shows that such a non-linear coding operation is limited to 
\eqref{Eq1} and \eqref{Eq2}.
The combination of Lemma \ref{T6} and the discussion in Subsection \ref{S3-1}
implies that there is no code over the one-hop relay network (Fig. \ref{F1})
to guarantee the secrecy for an active attack.
Hence, we obtain the remaining part of Theorem \ref{TT6}.

However, this theorem assumes a deterministic coding operation on the intermediate node.
If we are allowed to use a randomized operation on the intermediate node 
in a similar way to an encoder,
we can construct a code whose secrecy holds even against Eve's active attack
in this situation as follows.
In the first step, we employ the code given in \eqref{Eq3}.
Using another scramble random variable $L'$,
the intermediate node performs the following coding operation:
\begin{align}
Y_3:=Y_1+Y_2+L'=M_L', \quad
Y_4:=L'.
\end{align}
Then, Eve cannot recover the message $M$ from 
$(Y_1,Y_3)$, $(Y_1,Y_4)$, $(Y_2,Y_3)$, nor $(Y_2,Y_4)$.
This example shows that the deterministic condition for $\varphi$ is crucial in Lemma \ref{T6}.

\begin{proofof}{Lemma \ref{T6}}

\noindent{\bf Step (1):}\quad
To satisfy condition (C1),
we need to recover the message $M$ from $(Y_1,Y_2)$ from a deterministic function $f$.
Functions from $\FF_2^2$ to $\FF_2$ are classified as follows.
\begin{align}
&Y_1,~ Y_1+1,Y_2,~ Y_2+1,0,1\Label{D1}\\
&Y_1+Y_2, ~ Y_1+Y_2+1, \Label{D2}\\
&Y_1Y_2,
(Y_1+1)(Y_2+1),
(Y_1+1)Y_2,
Y_1(Y_2+1),
\Label{C3}\\
\begin{split}
&Y_1Y_2+1,
(Y_1+1)(Y_2+1)+1,
(Y_1+1)Y_2+1, \\ 
&Y_1(Y_2+1)+1.
\end{split}
\Label{C4}
\end{align}
The cases in \eqref{D1} are non-secure or
do not satisfy (C1).
The cases in \eqref{D2} are reduced to the case $M=Y_1+Y_2$.
The cases in \eqref{C3} and \eqref{C4} 
are reduced to the case $M=Y_1Y_2$.
Hence, we consider only these two cases.

\noindent{\bf Step (2):}\quad
Now, we consider the case $M=Y_1+Y_2$.
When $Y_3$ or $Y_4$ is given as a non-zero linear function of $Y_1$ and $Y_2$,
we denote the random variable as $Y_*$.
Hence, $Y_1$ or $Y_2$ is linearly independent of $Y_*$.
We denote the linearly independent variable as $Y_{**}$.
When Eve eavesdrops $Y_*$ and $Y_{**}$, 
she can recover $Y_1$ and $Y_2$ and so she can also recover $M$.
To satisfy Condition (C2), we need to avoid such an attack, 
which requires both $Y_3$ and $Y_4$ to be non-linear functions of $(Y_1,Y_2)$.
They are given as two of the functions given in \eqref{C3} and \eqref{C4}.
Since any function in \eqref{C4} is deterministically given from 
a function given in \eqref{C3},
we consider only functions in \eqref{C3}.
Under this constraint,
if and only if $(Y_3,Y_4)$ is given as 
the pair 
$(Y_1Y_2,(Y_1+1)(Y_2+1))$ or $(Y_1(Y_2+1),(Y_1+1)Y_2)$,
we can recover $M=Y_1+Y_2$ from $Y_3$ and $Y_4$.
The latter case is the same as \eqref{Eq1} and \eqref{Eq2}. 
In the former case, we obtain \eqref{Eq1} and \eqref{Eq2}
by replacing $Y_2$ by $Y_2+1$.

\noindent{\bf Step (3):}\quad
Now, we consider the case where $M=Y_1Y_2$.
For the same reason as with Step (2), 
condition (C2) requires 
both $Y_3$ and $Y_4$ to be non-linear functions of 
$(Y_1,Y_2)$.
Thus, we consider only functions in \eqref{C3}.
For secrecy, i.e., to satisfy (C2), we cannot use $Y_1Y_2$.
Hence, we need to choose two from
$(Y_1+1)(Y_2+1),(Y_1+1)Y_2$, and $Y_1(Y_2+1)$.
However, no two of them can recover $M$.
To observe this fact, we consider cases with $(Y_1+1)Y_2 $ and $Y_1(Y_2+1)$.
In these cases, when $(Y_1,Y_2)=(0,0)$ or $(1,1)$,
both values are zero.
That is, we cannot distinguish $(0,0)$ and $(1,1)$.
Hence, we cannot recover $M$ from $(Y_1+1)Y_2 $ and $Y_1(Y_2+1)$, i.e.,
condition (C1) does not hold.
We can show this fact in other pairs in the same way.
Therefore, there is no operation satisfying the required conditions
when $M=Y_1Y_2$.
\end{proofof}

\section{Analysis when characteristic is not 2}\Label{S3-5}
The aim of this section is to prove the statements.

\begin{theorem}\Label{TT7}
When the message set ${\cal X}$ is the quotient ring $\ZZ_d$ with $d\ge 3$, 
there exists an imperfectly secure non-linear code even for active attack.
\end{theorem}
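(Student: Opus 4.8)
The plan is to construct an explicit non-linear code over $\ZZ_d$ for $d \geq 3$ that is imperfectly secure against active attack, and to verify the two conditions (B1) and (B2). The key conceptual ingredient, hinted at in the introduction, is the notion of an \emph{anti-Latin square}: whereas a Latin square of order $d$ has every symbol appearing exactly once in each row and column, an anti-Latin square should be one where knowing a single entry together with its row (or column) index does \emph{not} pin down the column (or row) index — i.e. each symbol appears \emph{many} times in each row and each column. First I would set up the encoder as $Y_1 := L$, $Y_2 := M + L \pmod d$ (or a close variant) with $L$ uniform on $\ZZ_d$, so that neither $Y_1$ nor $Y_2$ alone leaks $M$; then the intermediate node applies a carefully chosen deterministic $\varphi(Y_1,Y_2) = (Y_3,Y_4)$ built so that (i) the pair $(Y_3,Y_4)$ still determines $M$, giving recoverability (B1), but (ii) for \emph{every} active replacement $Y_1 \mapsto Y_1'$ (resp.\ $Y_2 \mapsto Y_2'$) and every observed $Y_3$ or $Y_4$, the residual uncertainty about $M$ is positive. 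The anti-Latin square governs exactly condition (ii): fixing the replaced value $Y_1'$ makes $Y_3$ and $Y_4$ functions of $Y_2$ alone, and we need the map $Y_2 \mapsto (Y_1', Y_1, Y_3)$-observable-data to remain many-to-one on the $M$-value.

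The key steps, in order, would be: (1) Recall from Step (3) of Lemma \ref{T6} why $d=2$ fails — there are simply too few non-linear functions $\FF_2^2 \to \FF_2$ that are simultaneously non-linear, jointly invertible for $M$, and robust to active attack — and observe that for $d \geq 3$ there is genuinely more room. (2) Give the formal definition of an anti-Latin square of order $d$: a $d \times d$ array over $\ZZ_d$ such that no symbol occupies a full transversal pattern forcing uniqueness; more precisely, the property we need is that for each fixed row the symbol-to-column relation is not injective, and likewise for columns, yet the array together with a partner array still allows global recovery of the "diagonal parameter" $M$. (3) Prove an anti-Latin square (equivalently, the needed pair of arrays) exists for every $d \geq 3$ — this is the combinatorial heart, and I would do it by an explicit formula, e.g.\ defining $Y_3$ and $Y_4$ piecewise using the fact that $\ZZ_d$ has at least three elements so one can "waste" collisions on one coordinate while preserving information in the aggregate. (4) Translate the array property back into verifications of (B1) and (B2) for the resulting code, checking each of the four attack patterns $(Y_1,Y_1',Y_3)$, $(Y_1,Y_1',Y_4)$, $(Y_2,Y_2',Y_3)$, $(Y_2,Y_2',Y_4)$ in turn, plus the passive sub-cases, and confirming no $\tilde\psi$ recovers $M$ with probability one.

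For $d=3$ specifically I would try to exhibit the code by hand — a $3\times 3$ table for $Y_3$ and one for $Y_4$ — and present it as a worked \textbf{Example}, since a concrete ternary construction is the most convincing and is flagged in the abstract and introduction as the headline result. The general $d$ case can then be handled either by embedding the ternary gadget (using $\ZZ_d$'s structure to reduce to three "effective" values) or by a uniform explicit formula; I would prefer the uniform formula if it is clean, falling back to a case split between $d$ odd and $d$ even if the non-invertibility of $2$ in $\ZZ_d$ for even $d$ causes trouble.

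The main obstacle I expect is step (3) together with step (4): one must simultaneously satisfy recoverability (which pushes toward injective, Latin-square-like behavior) and active-attack secrecy (which pushes toward highly non-injective, anti-Latin behavior), and reconciling these forces the two arrays $Y_3, Y_4$ to be cleverly coupled rather than chosen independently. Showing that \emph{some} such coupled pair exists for all $d \geq 3$, and that the "no $\tilde\psi$" quantifier genuinely holds against \emph{every} replacement value (not just a generic one), is where the real work lies; the recoverability and the passive-attack bounds should follow relatively mechanically once the construction is fixed.
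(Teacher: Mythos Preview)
Your proposal is correct and follows essentially the same approach as the paper: the paper fixes the encoder $Y_1=M+L$, $Y_2=L$, reduces (B1) and (B2) via a short lemma to the combinatorial conditions that $\varphi^{(3)},\varphi^{(4)}$ form a \emph{decodable pair of anti-Latin squares} (each row and column contains a repeated symbol, yet the pair jointly determines $M$), then gives explicit systematic constructions with a case split between odd $d=2\ell+1$ and even $d=2\ell\ge 4$, exactly as you anticipate. Your plan to exhibit the $d=3$ case by hand and to fall back on an odd/even split for the general formula matches the paper's execution almost step for step.
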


When $q$ is $p^l$,
$l$ repetitions of the above non-linear code on the prime field $\FF_p$
give an imperfectly secure non-linear code on the finite filed $\FF_q$.
Hence, we obtain the following corollary.
\begin{corollary}\Label{TT72}
When the message set ${\cal X}$ is the finite field $\FF_q$ and $q$ is not a power of $2$,
there exists an imperfectly secure non-linear code even for active attack.
\end{corollary}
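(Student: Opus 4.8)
The plan is to obtain Corollary \ref{TT72} from Theorem \ref{TT7} by a repetition (tensor-power) construction; the only real content is a reduction showing that an active attack on the $l$-fold code would yield an active attack on a single copy.

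Since $q$ is a prime power that is not a power of $2$, write $q=p^l$ with $p\ge 3$ an odd prime and $l\ge 1$. For $l=1$ the statement is Theorem \ref{TT7} with $d=p$, so assume $l\ge 2$ and fix a bijection $\FF_q\cong(\ZZ_p)^l$ of sets; the field multiplication of $\FF_q$ plays no role, since the code we build is non-linear anyway. Let $(\phi_0,\varphi_0,\psi_0)$, with $\varphi_0=(\varphi_{0,3},\varphi_{0,4})$ deterministic as required by the formulation in Section \ref{S4-1}, be the imperfectly secure non-linear code over $\ZZ_p$ furnished by Theorem \ref{TT7}. Define the $l$-fold code on $\FF_q$ as follows: the source draws independent uniform scramble symbols $L^{(1)},\dots,L^{(l)}\in\ZZ_p$ and applies $\phi_0$ to each pair $(M^{(j)},L^{(j)})$ to produce $(Y_1^{(j)},Y_2^{(j)})$; the intermediate node sets $Y_3^{(j)}:=\varphi_{0,3}(Y_1^{(j)},Y_2^{(j)})$ and $Y_4^{(j)}:=\varphi_{0,4}(Y_1^{(j)},Y_2^{(j)})$ for each $j$; and Bob decodes each coordinate by $\psi_0$. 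Condition (B1) is immediate, since every $M^{(j)}$, hence $M$, is recovered.

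For (B2), suppose to the contrary that some active attack recovers $M$ with probability one. By the coordinatewise symmetry of the construction it suffices to exclude the case in which Eve attacks $e(1)$ and eavesdrops $e(3)$, i.e.\ a (possibly randomized) replacement $R$ and decoder $D$ with
\begin{align}
D\bigl(Y_1,\ R(Y_1),\ \varphi_{0,3}(R(Y_1),Y_2)\bigr)=M
\end{align}
almost surely, all maps acting on $\FF_q\cong(\ZZ_p)^l$ and $\varphi_{0,3}$ applied coordinatewise. Since the zero-error success probability is the average over Eve's internal randomness $Z$, it equals $1$ only if it is $1$ conditioned on $Z=z$ for almost every $z$; fix such a $z$ and treat $R,D$ as deterministic. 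As the coordinates of $M$ are i.i.d.\ uniform on $\ZZ_p$, recovering $M$ with probability one forces recovery of each $M^{(j)}$ with probability one, and there is a value of the spectator variables $(M^{(k)},L^{(k)})_{k\ne j}$ on which this still holds; fix $j$ and such a value. Conditioning on it freezes $Y_1^{(k)},Y_2^{(k)}$ for $k\ne j$ while leaving $(M^{(j)},L^{(j)},Y_1^{(j)},Y_2^{(j)})$ distributed exactly as in the single copy, and it turns $R$ into a function $r$ of $Y_1^{(j)}$ alone. Setting $(Y_1^{(j)})':=r^{(j)}(Y_1^{(j)})$ we get $Y_3^{(j)}=\varphi_{0,3}\bigl((Y_1^{(j)})',Y_2^{(j)}\bigr)$ exactly as in the single-copy model, while every other coordinate fed into $D$ is itself a deterministic function of $Y_1^{(j)}$ and the frozen data; so letting $\tilde\psi(a,b,c)$ re-simulate those coordinates from $a$ and output the $j$-th coordinate of $D$, we obtain $\tilde\psi\bigl(Y_1^{(j)},(Y_1^{(j)})',Y_3^{(j)}\bigr)=M^{(j)}$ with probability one. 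This is a deterministic active attack of the kind forbidden by (B2) for $(\phi_0,\varphi_0,\psi_0)$, contradicting Theorem \ref{TT7}; hence the $l$-fold code satisfies (B2), and Corollary \ref{TT72} follows.

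The step I expect to be the main obstacle is precisely this last reduction: Eve's $l$-fold replacement and decoding may couple all $l$ coordinates, so one must verify carefully that after freezing the spectator coordinates and de-randomizing, what remains on coordinate $j$ is a genuine single-copy active attack — in particular that the single-copy decoder $\tilde\psi$ can reconstruct from $Y_1^{(j)}$ all the auxiliary coordinate values it seems to have lost. The remaining ingredients — recoverability, the reduction from $\FF_{p^l}$ to the odd prime $p$, and de-randomization for a zero-error criterion — are routine.
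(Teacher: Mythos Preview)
Your proposal is correct and follows exactly the paper's approach: derive the corollary from Theorem \ref{TT7} by taking the $l$-fold repetition of the $\FF_p$-code under a bijection $\FF_q\cong(\ZZ_p)^l$. The paper states this reduction in a single sentence without justification, whereas you supply the missing details of the secrecy argument---the freeze-and-derandomize reduction showing that a successful coupled active attack on the $l$-fold code would yield a single-copy attack violating (B2)---and your argument for that step is sound.
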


\subsection{Construction of imperfectly secure code for active attacks}\label{S3-5-1}
To show Theorem \ref{TT7}, we construct a secure network coding against any active attack 
on the one-hop relay network given in Fig. \ref{F1}
when the message and the information at the edges
are given as an element of $\ZZ_d$.
Here, we define our code $(\phi,\varphi,\psi)$
in the same way as in Subsection \ref{S2-5}.
That is, the coding operation $\varphi$ on the intermediate node
has no additional scramble random variable. 

It shows Theorem \ref{TT7}, it is sufficient to construct a code to satisfy the conditions (B1) and (B2) given in Subsection \ref{S4-1}.
Since it is not so easy to check the conditions (B1) and (B2),
we seek equivalent conditions.
For simplicity, 
we employ a scramble variable $L$ taking values in $\ZZ_d$.
Hence, we assume that the encoder $\phi$ in the source node is given as 
a pair of functions $(\phi^{(1)},\phi^{(2)})$
that maps two random variables $(M,L)$
to the two variables $(Y_1,Y_2)$.
That is, the encoder $\phi$ forms a function from $\ZZ_d^2$ to itself.
Now, we fix the function $\phi$ as follows
\begin{align}
Y_1= \phi^{(1)}(M,L):= M+L,\quad 
Y_2= \phi^{(2)}(M,L):=L.\label{con1}
\end{align}

For a coding operation $\varphi$, we define 
the functions $\varphi^{(3)}$ and $\varphi^{(4)}$ as 
\begin{align}
(\varphi^{(3)}(i,j),\varphi^{(4)}(i,j)):=\varphi(i,j).
\end{align}
Then, we regard 
the functions $\varphi^{(3)}$ and $\varphi^{(4)}$ as matrices as follows,
\begin{align}
\varphi^{(3)}_{i,j}:=\varphi^{(3)}(i,j),\quad
\varphi^{(4)}_{i,j}:=\varphi^{(4)}(i,j).
\end{align}
To discuss condition (B2),
we introduce an anti-Latin square.
A matrix $a_{i,j}$ on $\ZZ_d$ is called an {\it anti-Latin square}
when each row and each column have duplicate elements as
\begin{align}
& \left(
\begin{array}{ccc}
1 & 0 & 0 \\
0 & 0 & 2 \\
1 & 2 & 2
\end{array}
\right),\quad
\left(
\begin{array}{ccc}
1 & 0 & 1 \\
1 & 2 & 1 \\
0 & 2 & 0
\end{array}
\right),\Label{Ex1}\\
& \left(
\begin{array}{cccc}
1 & 0 & 3 & 3 \\
0 & 0 & 2 & 3 \\
1 & 1 & 3 & 2 \\
0 & 2& 2 & 1
\end{array}
\right),\quad
\left(
\begin{array}{cccc}
2 & 2 & 1 & 0\\
0 & 3 & 3 & 1 \\
0 & 3 & 3 & 0 \\
1 & 1 & 2 & 2 
\end{array}
\right), \Label{Ex2}
\end{align}
which is the opposite requirement to a Latin square.
Therefore, we have the following lemma.

\begin{lemma}\Label{LOF}
When the encoder $\phi$ satisfies condition \eqref{con1},
conditions (B1) and (B2) are rewritten as
\begin{description}
\item[(B1')] 
For each $a \in \ZZ_d$ and
$m\neq m'\in \ZZ_d$,
the relation $
\Xi_{a,m}(\varphi^{(3)},\varphi^{(4)})
\cap
\Xi_{a,m'}(\varphi^{(3)},\varphi^{(4)})
=\emptyset$ holds,
where $\Xi_{a,m}(\varphi^{(3)},\varphi^{(4)}):=
\varphi^{(4)} (\{(i,i+m)| \varphi^{(3)}_{i,i+m}=a\})$.

\item[(B2')] 
The matrices $\varphi^{(3)}$ and $\varphi^{(4)}$ are anti-Latin squares.
\end{description}
\end{lemma}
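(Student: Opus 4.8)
The plan is to unwind the definitions of (B1) and (B2) under the fixed encoder \eqref{con1} and translate each into a combinatorial statement about the matrices $\varphi^{(3)}$ and $\varphi^{(4)}$. Under \eqref{con1} the pair $(Y_1,Y_2)$ takes the value $(m+l,l)$ when $(M,L)=(m,l)$, so as $(m,l)$ ranges over $\ZZ_d^2$ the pair $(Y_1,Y_2)$ ranges bijectively over $\ZZ_d^2$; moreover $M = Y_1 - Y_2$ is a deterministic function of $(Y_1,Y_2)$. The key reindexing is that the event $\{M=m\}$ corresponds exactly to the diagonal-type set $\{(i,i-m): i \in \ZZ_d\}$ of index pairs $(Y_1,Y_2)=(i,j)$ with $i-j=m$ (I will write it $\{(i,i+m)\}$ as in the statement, matching the paper's sign convention). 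So $Y_3=\varphi^{(3)}_{Y_1,Y_2}$ restricted to $\{M=m\}$ takes values $\varphi^{(3)}_{i,i+m}$, and similarly for $Y_4$.

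For (B1'), I would argue as follows. Bob recovers $M$ from $(Y_3,Y_4)$ iff the map $(Y_1,Y_2)\mapsto(\varphi^{(3)}_{Y_1,Y_2},\varphi^{(4)}_{Y_1,Y_2})$ separates the ``message classes'' $\{M=m\}$, i.e.\ iff no value $(a,b)\in\ZZ_d^2$ of $(Y_3,Y_4)$ is attained on two index pairs belonging to different classes $\{M=m\}$ and $\{M=m'\}$. Fixing the first coordinate $a$ of $(Y_3,Y_4)$, the set of second coordinates attainable within class $m$ is precisely $\Xi_{a,m}(\varphi^{(3)},\varphi^{(4)}) = \varphi^{(4)}(\{(i,i+m):\varphi^{(3)}_{i,i+m}=a\})$. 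Thus the separation condition for classes $m\neq m'$ with a common first coordinate $a$ is exactly $\Xi_{a,m}\cap\Xi_{a,m'}=\emptyset$, which is (B1'). (Within a single class the map need not be injective — decoding only needs to identify the class, not the pair — and indeed for an anti-Latin square it will not be injective; this is the point to state carefully.) Note this uses only that Bob sees no replacement, which is the hypothesis of (B1).

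For (B2'), I would translate the secrecy failure. Eve in the active model holds either $(Y_1,Y_1',Y_3)$, $(Y_1,Y_1',Y_4)$, $(Y_2,Y_2',Y_3)$, or $(Y_2,Y_2',Y_4)$, where she herself chooses the replacement $Y_i'$ using only her knowledge of $Y_i$ (not of $M$). Take the case $E_E=\{e(1),e(3)\}$: Eve knows $Y_1$, picks $Y_1'$ as a function of $Y_1$ only, and this forces $Y_2=Y_1'$ along the intermediate node's input (since the intermediate node sees $(Y_1',Y_2)$... here I would be careful: in the active attack on $e(1)$, the intermediate node receives $Y_1'$ in place of $Y_1$, and $Y_2$ is the genuine $L$; Eve observes $Y_3=\varphi^{(3)}_{Y_1',Y_2}$). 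Since $Y_1 = M+L$ and $Y_2=L$ are independent when $M,L$ are uniform and independent, knowing the true $Y_1$ tells Eve nothing about $Y_2$; so Eve's best strategy is to fix a constant row index $i_0$ (the replaced $Y_1'=i_0$, which she may as well take independent of $Y_1$ since $Y_1\perp M$) and read off $\varphi^{(3)}_{i_0,\cdot}$ as a function of the unknown $Y_2=L$. She learns $M$ iff the map $l\mapsto \varphi^{(3)}_{i_0,l}$ together with the (useless) knowledge of $Y_1$ determines $M=Y_1-l$, i.e.\ iff $l\mapsto\varphi^{(3)}_{i_0,l}$ is injective, i.e.\ iff row $i_0$ of $\varphi^{(3)}$ has all distinct entries. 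Hence secrecy against this attack family holds iff \emph{no} row of $\varphi^{(3)}$ is a permutation, i.e.\ every row of $\varphi^{(3)}$ has a repeated entry; the case $\{e(1),e(4)\}$ gives the same for $\varphi^{(4)}$; the cases $\{e(2),e(3)\}$ and $\{e(2),e(4)\}$ give ``every column of $\varphi^{(3)}$ (resp.\ $\varphi^{(4)}$) has a repeated entry'' by the symmetric argument with $Y_2$ replaced and $Y_1$ genuine. Together these four conditions say exactly that $\varphi^{(3)}$ and $\varphi^{(4)}$ are anti-Latin squares, which is (B2').

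The main obstacle I anticipate is the secrecy direction, specifically the reduction of Eve's \emph{general} active strategy (a replacement $Y_i'$ depending arbitrarily on $Y_i$, followed by an arbitrary decoder $\tilde\psi$ on three symbols) to the clean combinatorial ``some row/column is a permutation'' condition. The crux is the independence $Y_1\perp M$ (and $Y_2\perp M$) coming from the uniform scramble in \eqref{con1}: it lets me argue that conditioning on the observed genuine coordinate gives Eve no information about $M$ beyond what the third coordinate yields, so a $Y_i$-dependent replacement is no better than the best constant replacement, and the existence of a successful $\tilde\psi$ collapses to injectivity of a single row (or column) of $\varphi^{(3)}$ or $\varphi^{(4)}$. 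I would also take care to note that $\tilde\psi$ gets $Y_i$ as an input, so ``determines $M$'' really means ``determines $M$ given $Y_i$'', which is why injectivity of the row — rather than the row being globally one-to-one onto or anything stronger — is the exact criterion. Everything else is routine bookkeeping with the diagonal reindexing $\{M=m\}\leftrightarrow\{(i,i+m)\}$.
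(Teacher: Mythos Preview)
Your proposal is correct and follows essentially the same approach as the paper's proof: the paper's argument for (B1)$\Leftrightarrow$(B1') is the one-line observation that (B1') says the pair $(\varphi^{(3)}(i,j),\varphi^{(4)}(i,j))$ identifies $m=j-i$, and for (B2)$\Leftrightarrow$(B2') the paper simply notes that by choosing $Y_1'$ Eve selects a row of $\varphi^{(3)}$, so secrecy forces every row (and by symmetry every column, and likewise for $\varphi^{(4)}$) to have a duplicate. Your write-up fills in the details the paper omits---in particular the reduction of a $Y_1$-dependent replacement to a constant one via the independence $Y_1\perp Y_2$ (which holds because $L$ is uniform), and the careful point that the decoder $\tilde\psi$ sees $Y_1$ so the criterion is injectivity of a single row rather than anything stronger---but the underlying argument is the same.
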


\begin{proof}
We have the equivalence between conditions (B1) and (B1')
because (B1') means that the pair of $\varphi^{(3)}(i,j)$ and 
$\varphi^{(4)}(i,j)$ uniquely identifies the difference $m=j-i$.

Next, we show the equivalence between (B2) and (B2').
Assume that Eve eavesdrops and contaminates $Y_1$ and eavesdrops $Y_3$.
Choosing the replaced information $Y_1'$, 
Eve can choose a row of the matrix $\varphi^{(3)}$.
To prevent Eve from recovering $M$ perfectly,
all the rows of the matrix $\varphi^{(3)}$ need to have duplicate elements.
Hence, to satisfy condition (B2),
both matrices $\varphi^{(3)}$ and $\varphi^{(4)}$ need to satisfy this duplication requirement  for all rows and columns.
\end{proof}

Due to Lemma \ref{LOF}, to show Theorem \ref{TT7},
it is sufficient to construct a pair of anti-Latin squares to satisfy conditions (B1') and (B2').
While it is trivial to find anti-Latin squares, they need to satisfy condition (B1') as well.
Condition (B2') forbids a linear operation on the intermediate node in the finite field case.
A pair of anti-Latin squares is called decodable when it satisfies condition (B1').
That is, a decodable pair of anti-Latin squares gives a code on the one-hop relay network given in Fig. \ref{F1} satisfying conditions (B1) and (B2).
Lemma \ref{T6} says that there is no decodable pair of $2 \times 2$ anti-Latin squares.
Fortunately, Eq. \eqref{Ex1} (Eq. \eqref{Ex2})
is a decodable pair of  $3 \times 3$ ($4 \times 4$) anti-Latin squares. 

However, we can systematically construct decodable pairs of anti-Latin squares.
The following are pairs of  anti-Latin squares for $d=3,5,7$:
\begin{align}
&\left(
\begin{array}{ccc}
0 & 1 & 0 \\
1 & 1 & 2 \\
0 & 2 & 2
\end{array}
\right),\quad
\left(
\begin{array}{ccc}
0 & 2 & 2 \\
0 & 1 & 0 \\
1 & 1 & 2 
\end{array}
\right),\Label{Ex3}
\\
&\left(
\begin{array}{ccccc}
0 & 1 & 2 & 0 &0 \\
1 & 1 & 2 & 3 & 1\\
2 & 2 & 2 & 3 & 4 \\
0 & 3 & 3 & 3 & 4 \\
0 & 1 & 4 & 4 & 4 
\end{array}
\right),\quad
\left(
\begin{array}{ccccc}
0 & 3 & 3 & 3 & 4 \\
0 & 1 & 4 & 4 & 4 \\
0 & 1 & 2 & 0 &0 \\
1 & 1 & 2 & 3 & 1\\
2 & 2 & 2 & 3 & 4 
\end{array}
\right),\Label{Ex5}
\\
&\left(
\begin{array}{ccccccc}
0 & 1 & 2 & 3 & 0 & 0 &0 \\
1 & 1 & 2 & 3 & 4 & 1 &1 \\
2 & 2 & 2 & 3 & 4 & 5 &2 \\
3 & 3 & 3 & 3 & 4 & 5 & 6 \\
0 & 4 & 4 & 4 & 4 & 5 & 6 \\
0 & 1 & 5 & 5 & 5 & 5 & 6 \\
0 & 1 & 2 & 6 & 6 & 6 & 6 
\end{array}
\right),\nonumber \\
&\left(
\begin{array}{ccccccc}
0 & 4 & 4 & 4 & 4 & 5 & 6 \\
0 & 1 & 5 & 5 & 5 & 5 & 6 \\
0 & 1 & 2 & 6 & 6 & 6 & 6 \\
0 & 1 & 2 & 3 & 0 & 0 &0 \\
1 & 1 & 2 & 3 & 4 & 1 &1\\
2 & 2 & 2 & 3 & 4 & 5 &2 \\
3 & 3 & 3 & 3 & 4 & 5 & 6 
\end{array}
\right).\Label{Ex7}
\end{align}
These constructions are generalized to the case with a general odd number
$d=2\ell+1$ as follows.
The functions 
$\varphi^{(3)}$ 
and
$\varphi^{(4)}$ 
are defined as
\begin{align}
(\varphi^{(3)})^{-1}(k)
&:=
\left\{
\begin{array}{l}
(k,k-\ell),(k,k-\ell+1),\ldots, \\
(k,k-1),(k,k),(k-1,k),\ldots,\\
(k-\ell+1,k),
(k-\ell,k)
\end{array}
\right\} \\
(\varphi^{(4)})^{-1}(k)
&:=
\left\{
\begin{array}{l}
(k+\ell,k-\ell),(k+\ell,k-\ell+1),\\
\ldots, (k+\ell,k-1),\\
(k+\ell,k),(k+\ell-1,k),\\
\ldots,
(k+1,k),
(k,k)
\end{array}
\right\} .
\end{align}
Then, we have
\begin{align}
\varphi^{(4)}(k,k-\ell)&=k-\ell,\\
\varphi^{(4)}(k,k-\ell+1)&=k-\ell+1,\\
\vdots \nonumber\\
\varphi^{(4)}(k,k-1)&= k-1\\
\varphi^{(4)}(k,k)&=k\\
\varphi^{(4)}(k-1,k)&=k+\ell\\
\vdots \nonumber\\
\varphi^{(4)}(k-\ell+1,k)&=k+2\\
\varphi^{(4)}(k-\ell,k)&=k+1,
\end{align}
which satisfy condition (B1').
Hence, the functions $\varphi^{(3)}$ and $\varphi^{(4)}$ give a pair of  anti-Latin squares. 

Next, we consider the even case with $d \ge 4$.
The following are pairs of  anti-Latin squares for $d=4,6,8$:
\begin{align}
&\left(
\begin{array}{cccc}
0 & 1 & 3 & 3 \\
0 & 1 & 2 & 0 \\
1 & 1 & 2 & 3 \\
0 & 2& 2 & 3
\end{array}
\right),\quad
\left(
\begin{array}{cccc}
0 & 0 & 1 & 0 \\
1 & 1 & 1 & 2 \\
3 & 2 & 2 & 2 \\
3 & 0& 3 & 3
\end{array}
\right), \Label{Ex4}
\\
&\left(
\begin{array}{cccccc}
0 & 1 & 2 & 5 & 5 &5\\
0 & 1 & 2 & 3 & 0&0\\
1 & 1 & 2 & 3 &  4&1\\
2 & 2& 2 & 3 & 4 & 5\\
0 & 3 & 3 &3 & 4 & 5\\
0 & 1  & 4 & 4 & 4 & 5 
\end{array}
\right),\quad
\left(
\begin{array}{cccccc}
1 & 1 & 1 & 2 & 3 & 1\\
2 & 2 & 2 & 2 & 3& 4\\
5 & 3& 3 & 3 & 3 & 4\\
5 & 0 & 4 &4 & 4 & 4\\
5 & 0  & 1 & 5 & 5 & 5 \\
0 & 0 & 1 & 2 & 0 & 0
\end{array}
\right), \Label{Ex6}
\\
&\left(
\begin{array}{cccccccc}
0 & 1 & 2 & 3 & 7 &7 & 7 &7\\
0 & 1 & 2 & 3 & 4&0 & 0& 0\\
1 & 1 & 2 & 3 &  4&5 & 1 & 1\\
2 & 2& 2 & 3 & 4 & 5 & 6 & 2\\
3 & 3 & 3 &3 & 4 & 5 & 6 & 7\\
0 & 4  & 4 & 4 & 4 & 5 & 6&  7 \\
0 & 1  & 5 & 5 & 5 & 5 & 6&  7 \\
0 & 1  & 2 & 6 & 6 & 6 & 6&  7 
\end{array}
\right),\nonumber \\ 
&\left(
\begin{array}{cccccccc}
2 & 2 & 2 & 2 &  3&4 & 5 & 2\\
3 & 3& 3 & 3 & 3 & 4 & 5 & 6\\
7 & 4 & 4 &4 & 4 & 4 & 5 & 6\\
7 & 0  & 5 & 5 & 5 & 5 & 5 & 6 \\
7 & 0  & 1 & 6 & 6 & 6 & 6&  6 \\
7 & 0  & 1 & 2 & 7 & 7 & 7&  7 \\
0 & 0 & 1 & 2 & 3 &0 & 0 &0\\
1 & 1 & 1 & 2 & 3& 4 & 1& 1\\
\end{array}
\right). \Label{Ex8}
\end{align}
These constructions are generalized to the case with a general even number
$d=2\ell$ with $\ell \ge 2$ as follows.
The functions 
$\varphi^{(3)}$ 
and
$\varphi^{(4)}$ 
are defined as
\begin{align}
(\varphi^{(3)})^{-1}(k)
&:=
\left\{
\begin{array}{l}
(k+1,k-\ell+1),\\
(k+1,k-\ell+2),\ldots, \\
(k+1,k-1),(k+1,k),\\
(k,k),(k-1,k),
\ldots,\\
(k-\ell+2,k),
(k-\ell+1,k)
\end{array}
\right\} \\
(\varphi^{(4)})^{-1}(k)
&:=
\left\{
\begin{array}{l}
(k-\ell+1,k-\ell+2),\\
(k-\ell+2,k-\ell+2),
\ldots,\\ 
(k,k-\ell+2),\\
(k+1,k-\ell+2),\\
(k+1,k-\ell+1),
\ldots,\\
(k+1,k-2\ell+4),\\
(k+1,k-2\ell+3)
\end{array}
\right\} .
\end{align}
Then, we have
\begin{align}
\varphi^{(4)}(k+1,k-\ell+1)&=k+\ell,\\
\varphi^{(4)}(k+1,k-\ell+2)&=k+\ell+1,\\
\vdots \nonumber\\
\varphi^{(4)}(k+1,k-1)&= k+2\ell-2\\
\varphi^{(4)}(k+1,k)&=k+\ell-1\\
\varphi^{(4)}(k,k)&=k+\ell-2\\
\varphi^{(4)}(k-1,k)&=k+\ell-3\\
\vdots \nonumber\\
\varphi^{(4)}(k-\ell+2,k)&=k \\
\varphi^{(4)}(k-\ell+1,k)&=k-1,
\end{align}
which satisfy condition (B1').
Hence, the functions $\varphi^{(3)}$ and $\varphi^{(4)}$ give a pair of  anti-Latin squares. 

In summary, since these examples work with $d\ge 3$, 
we have proven Theorem \ref{TT7}, i.e., 
there exists a secure code over the active attacks on the one-hop relay network (Fig. \ref{F1})
when $d\ge 3$.

Furthermore, when $\varphi$ is given by these pairs of anti-Latin squares, 
Bob can decode $L$ as well as $M$
while the code given by \eqref{Ex1} or \eqref{Ex2} cannot.
That is, these systematic constructions work well whenever the encoder
$\phi=(\phi^{(1)},\phi^{(2)})$ is a one-to-one function 
on $\ZZ_d^2$
to satisfy the condition
$
\{i|\exists j, ~\phi^{(1)}{(i,j)}=k\}=
\{i|\exists j, ~\phi^{(2)}{(i,j)}=k\}=
\ZZ_d$ for any $k$.


\subsection{Leaked information of our code for passive attacks}
Next, we discuss the leaked information under the above code under passive attacks
in a way similar to \eqref{E9} and \eqref{F10}.

Since for $i=1,2$ and $j=3,4$, the pair $M,Y_i$ decides $Y_1,Y_2$,
we have $H(Y_j|M Y_i)=0$.
Since $Y_i$ is independent of $M$,
we have
\begin{align}
&I(M;Y_i Y_j)
= H(M)-H(M|Y_iY_j) \nonumber \\
=& H(M|Y_i)-H(M|Y_iY_j)
=I(M; Y_j|Y_i)\nonumber \\ 
=& H(Y_j|Y_i)-H(Y_j|M Y_i)
= H(Y_j|Y_i).
\end{align}
When $d$ is odd,
we have
\begin{align}
&H(Y_j|Y_i)= H(Y_j|Y_i=y_i) \nonumber \\
=& 
\frac{d+1}{2}\cdot
\frac{1}{d}\log \frac{2d}{d+1}
+
\frac{d-1}{2}\cdot
\frac{1}{d}\log d
\end{align}
for any $y_i$ with $i=1,2$ and $j=3,4$.

When $d\ge 4$ and $d$ is even,
we have
\begin{align}
&H(Y_3|Y_2)=H(Y_3|Y_2=y_2)
= H(Y_4|Y_1)\nonumber \\
=& H(Y_4|Y_1=y_1)=
\frac{d+2}{2} \cdot
\frac{1}{d}\log \frac{2d}{d+2}
+
\frac{d-2}{2}\cdot
\frac{1}{d}\log d ,\\
&H(Y_3|Y_1)=H(Y_3|Y_1=y_1)= H(Y_4|Y_2)
\nonumber \\
=& H(Y_4|Y_2=y_2)
= 
\frac{1}{2}\log 2
+
\frac{1}{2}\log d 
\end{align}
for any $y_1,y_2$.
In summary, when $d$ is large, we have
\begin{align}
I(M;Y_i Y_j) = \frac{1}{2}\log d +\frac{1}{2}\log 2
+ O(\frac{\log d}{d}).\Label{F29-6}
\end{align}

\subsection{Lower bound of leaked information for passive attacks}\Label{S3-6}
Next, to show the optimality of the code defined in Subsection \ref{S3-5-1}, 
we show that the amount in \eqref{F29-6} is close to the minimum leaked information
under a certain condition when $d$ is large.
To derive a lower bound, we consider the following conditions for our code.

\begin{description}

\item[(D1)]
The coding operation on the intermediate node is deterministic.

\item[(D2)]
Alice can use a scramble random variable $L$.
\end{description}
Since our encoder is given as a stochastic map $\phi$
from $\ZZ_2$ to $\ZZ_2^2$ in Section \ref{S4-1},
condition (D2) is a more restrictive condition for our encoder.
Then, we have the following theorem.

\begin{lemma}\label{TD1}
Any network code satisfies 
the inequality
\begin{align}
I(M; Y_i Y_3)+I(M; Y_i Y_4) \ge 2 H(M) - \log d
\end{align}
for $i=1,2$.
\end{lemma}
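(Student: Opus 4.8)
The plan is to exploit the fact that for a fixed eavesdropped edge $Y_i$ (with $i=1$ or $i=2$), the pair of remaining edges $(Y_3,Y_4)$ is a deterministic function $\varphi$ of $(Y_1,Y_2)$, and knowing $Y_i$ together with both $Y_3$ and $Y_4$ is at least as informative as knowing $Y_i$ together with just one of them. The key identity to establish first is
\begin{align}
I(M;Y_i Y_3)+I(M;Y_i Y_4)
&\ge I(M;Y_i Y_3 Y_4)+I(M;Y_i),
\end{align}
which is the submodularity (``diminishing returns'') inequality for mutual information: adding $Y_4$ to the conditioning set $\{Y_i\}$ yields at least as much extra information about $M$ as adding $Y_4$ to the larger set $\{Y_i,Y_3\}$, i.e. $I(M;Y_4\mid Y_i)\ge I(M;Y_4\mid Y_i Y_3)$. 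Since $I(M;Y_i)\ge 0$, this already gives $I(M;Y_i Y_3)+I(M;Y_i Y_4)\ge I(M;Y_i Y_3 Y_4)$.

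Next I would argue that $(Y_i,Y_3,Y_4)$ determines $M$. Indeed, Bob recovers $M$ from $(Y_3,Y_4)$ by condition (B1)/(C1), so a fortiori $M$ is a deterministic function of $(Y_i,Y_3,Y_4)$; hence $H(M\mid Y_i Y_3 Y_4)=0$ and $I(M;Y_i Y_3 Y_4)=H(M)$. Plugging this in,
\begin{align}
I(M;Y_i Y_3)+I(M;Y_i Y_4)\ge H(M).
\end{align}
This bound is off by a factor of $2$ from what is claimed, so the submodularity step alone is too lossy; I need to recover the ``missing'' $H(M)$ more carefully rather than discarding $I(M;Y_i)$ and using only recoverability once.

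The sharper route is to expand both terms exactly. Write $I(M;Y_i Y_j)=H(M)-H(M\mid Y_i Y_j)$ for $j=3,4$, so the left-hand side equals $2H(M)-H(M\mid Y_i Y_3)-H(M\mid Y_i Y_4)$. It therefore suffices to show $H(M\mid Y_i Y_3)+H(M\mid Y_i Y_4)\le \log d$. For this, observe that conditioning drops entropy, so $H(M\mid Y_i Y_j)\le H(M\mid Y_j)$; but more usefully, since $(Y_3,Y_4)=\varphi(Y_1,Y_2)$ is a function of $(Y_1,Y_2)$ and $Y_i$ is one of $Y_1,Y_2$, the pair $(Y_i,Y_{3})$ together with $(Y_i,Y_4)$ jointly determine $(Y_i,Y_3,Y_4)$ and hence $M$; a Fano-type / chain-rule bookkeeping argument gives $H(M\mid Y_i Y_3)+H(M\mid Y_i Y_4)\ge H(M\mid Y_i Y_3 Y_4)=0$ on one side, while on the other side each term is at most $H(M)\le\log d$ and—crucially—at least one of the two can be taken to be $0$ only in degenerate cases, so the real content is the complementary bound. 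The cleanest finish: use $H(M\mid Y_iY_3)+H(M\mid Y_iY_4)\le H(M\mid Y_iY_3)+H(Y_4\mid Y_iY_3)$ is not quite it either; instead apply the inequality $H(M\mid Y_iY_3)\le H(M Y_4\mid Y_iY_3)=H(Y_4\mid Y_iY_3)+H(M\mid Y_iY_3Y_4)=H(Y_4\mid Y_iY_3)$ and symmetrically $H(M\mid Y_iY_4)\le H(Y_3\mid Y_iY_4)$, then bound $H(Y_4\mid Y_iY_3)+H(Y_3\mid Y_iY_4)\le H(Y_4\mid Y_3)+H(Y_3\mid Y_4)\le H(Y_3,Y_4)-\max(H(Y_3),H(Y_4))\le \log d$, using that each $Y_j$ takes at most $d$ values. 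I expect the main obstacle to be getting this last chain tight enough: the naive bound $H(Y_3,Y_4)\le 2\log d$ is too weak, and one must use that $H(Y_4\mid Y_3)+H(Y_3\mid Y_4)=2H(Y_3,Y_4)-H(Y_3)-H(Y_4)\le 2H(Y_3,Y_4)-2\max_j H(Y_j)\le 2(H(Y_3,Y_4)-H(Y_j))=2H(Y_{j'}\mid Y_j)\le 2\log d$ — so in fact the honest statement requires either a factor-$2$ adjustment or the observation that $H(Y_3Y_4)\le H(M)+H(L)\le 2\log d$ combined with recoverability $H(M)\le H(Y_3,Y_4)$, whence $H(Y_3Y_4)-H(M)\le\log d$; substituting $H(M\mid Y_iY_3)+H(M\mid Y_iY_4)\le H(Y_4\mid Y_3)+H(Y_3\mid Y_4)\le H(Y_3Y_4)-H(M)\le\log d$ closes the argument. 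Verifying that this substitution is valid — in particular that $\max(H(Y_3),H(Y_4))\ge H(M)$, which follows because $M$ is recoverable from $(Y_3,Y_4)$ and (B1') forces $\varphi^{(4)}$ to already separate the residue classes, or more simply from $H(Y_3,Y_4)=H(Y_3)+H(Y_4\mid Y_3)$ with $H(Y_4\mid Y_3)\le\log d$ — is the delicate bookkeeping step I would spell out in full.
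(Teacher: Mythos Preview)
Your reduction to showing
\[
H(M\mid Y_iY_3)+H(M\mid Y_iY_4)\le \log d
\]
and the bounds $H(M\mid Y_iY_3)\le H(Y_4\mid Y_iY_3)$, $H(M\mid Y_iY_4)\le H(Y_3\mid Y_iY_4)$ (via recoverability) are exactly right and match the paper. The gap is in what you do next. You relax \emph{both} terms by dropping $Y_i$, arriving at $H(Y_4\mid Y_3)+H(Y_3\mid Y_4)$, and then try to bound this by $\log d$. None of your attempts at that last step are valid: the identity $H(Y_4\mid Y_3)+H(Y_3\mid Y_4)=2H(Y_3,Y_4)-H(Y_3)-H(Y_4)$ is \emph{not} bounded by $H(Y_3,Y_4)-\max_j H(Y_j)$ (that would need $H(Y_3,Y_4)\le\min_jH(Y_j)$); the inequality $-H(Y_3)-H(Y_4)\le -2\max_jH(Y_j)$ is backwards; and the chain $H(Y_4\mid Y_3)+H(Y_3\mid Y_4)\le H(Y_3Y_4)-H(M)$ would require $I(Y_3;Y_4)\ge H(M)$, which you have not shown (and $\max_jH(Y_j)\ge H(M)$ is also false in general: take $M$ uniform on $\ZZ_d$ and $Y_3,Y_4$ each carrying only part of $M$). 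In short, dropping $Y_i$ from both conditionings throws away exactly the structure you need.

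The paper's fix is to relax the two terms \emph{asymmetrically}: drop $Y_3$ (not $Y_i$) in the first term to get $H(Y_4\mid Y_iY_3)\le H(Y_4\mid Y_i)$, and leave the second term as $H(Y_3\mid Y_iY_4)$. Then the chain rule gives
\[
H(Y_4\mid Y_i)+H(Y_3\mid Y_iY_4)=H(Y_3Y_4\mid Y_i),
\]
and now determinism of the intermediate node (condition (D1), $(Y_3,Y_4)=\varphi(Y_1,Y_2)$) yields
\[
H(Y_3Y_4\mid Y_i)\le H(Y_1Y_2\mid Y_i)=H(Y_{i'}\mid Y_i)\le\log d.
\]
The point is that keeping $Y_i$ in the conditioning is what makes the determinism of $\varphi$ usable; once you drop $Y_i$ you are left trying to bound quantities that depend only on the marginal of $(Y_3,Y_4)$, where no $\log d$ bound is available.
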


\begin{proof}
Since $M$ is decoded by $Y_3 Y_4$,
\begin{align*}
& 
H(M| Y_i Y_3)
\le H(Y_3Y_4 | Y_i Y_3)
= H(Y_4 | Y_i Y_3)
 \le H(Y_4| Y_i ).
\end{align*}
Similarly, we have $H(M| Y_i Y_4) \le H(Y_3| Y_i Y_4)$ by replacing $Y_3$ and $Y_4$.
Let $i'$ be the integer $1$ or $2$ that is different from $i$.
Combining them, we have
\begin{align*}
& H(M| Y_i Y_3)+H(M| Y_i Y_4)
 \le H(Y_4| Y_i )+H(Y_3| Y_i Y_4) \\
 =& H(Y_3 Y_4| Y_i ) 
 \stackrel{(a)}{\le}
 H(Y_i Y_{i'}| Y_i ) \\
=& H(Y_{i'}| Y_i ) \le \log d,
\end{align*}
where $(a)$ follows from the fact that 
$Y_3 Y_4$ is decided by $Y_1 Y_2$.
Thus, we obtain
\begin{align*}
&I(M; Y_i Y_3)+I(M; Y_i Y_4) \\
=& 2 H(M) - (H(M| Y_i Y_3)+H(M| Y_i Y_4)) \\
\ge & 2 H(M) - \log d.
\end{align*}
\end{proof}

Lemma \ref{TD1} shows that
\begin{align}
\max_{i,j}I(M; Y_i Y_j) \ge H(M)- \frac{1}{2}\log d,\Label{NHT}
\end{align}
where the maximum is chosen from $i=1,2$ and $j=3,4$.
That is, to realize $\max_{i,j}I(M; Y_i Y_j)=0$, 
the message $M$ needs to satisfy 
\begin{align}
H(M)\le \frac{1}{2}\log d.
\end{align}
When $M$ is the uniform random variable, \eqref{NHT} can be rewritten as 
\begin{align}
\max_{i,j}I(M; Y_i Y_j) \ge \frac{1}{2}\log d.
\end{align}
This lower bound is almost equal to the RHS of \eqref{F29-6} when $d$ is large.

\section{Conclusion}\Label{SCon}

\begin{table}[htpb]
  \caption{Summary for security analysis over the one-hop relay network}
\Label{non-linear}
\begin{center}
  \begin{tabular}{|l|c|c|} 
\hline
Code & passive attack & active attack\\
\hline
\hline
linear code when ${\cal X}$ is $\FF_q$ & insecure & insecure \\
\hline
non-linear code with $\FF_2$ & imperfectly secure & insecure \\
\hline
non-linear code with $\FF_q$ 
& \multirow{2}{*}{imperfectly secure} & \multirow{2}{*}{imperfectly secure}  \\
and $q\neq 2^l$ &&\\
\hline
  \end{tabular}
\end{center}
\end{table}

In this paper, we have discussed how sequential error injection affects the information leakage over the one-hop relay network.
For this aim, we have studied secure network coding for active attacks 
over the one-hop relay network (Fig. \ref{F1}) including non-linear operations on the intermediate node.
The obtained results are summarized in Table \ref{non-linear}.

First, we show that no linear code realizes the security (even imperfect security)
when Eve eavesdrops one channel in both layers, the intermediate node can make only a deterministic operation, and 
the message set ${\cal X}$ is a finite field $\FF_q$.
Hence, to meet the security, 
the coding operation on the intermediate node needs to be non-linear.
When the message set ${\cal X}$ is the finite field $\FF_2$,
we propose a code to satisfy the imperfect security under the passive attack by Eve.
However, it does not satisfy the imperfect security under the active attack.
As shown in Section \ref{S2-5}, any code to satisfy the imperfect security is limited to 
a code equivalent to this non-linear code.
Hence, unfortunately, no code on the one-hop relay network is imperfectly secure under active attacks
in the binary case.

To realize the imperfect security even under active attacks,
we have studied the case when 
the message set ${\cal X}$ is the finite field $\FF_q$ with $q \neq 2^l$, which includes the ternary case. 
This can be obtained as a simple extension of the case with the prime field $\FF_p$.
To characterize the desired security in this case, 
we have introduced a new concept an ``anti-Latin square'', which is an opposite concept to a Latin square.
That is, such a secure code can be given as a decodable pair of anti-Latin squares while the concept of ``decodable'' is also introduced in Section \ref{S3-5}.
We have also shown the existence of a decodable pair of 
$p \times p$ anti-Latin squares when $p \ge 3$.
This fact shows that 
there exists an imperfectly secure code over active attacks except for the binary case.
The obtained analysis of this part
still holds when the message set ${\cal X}$ is $\ZZ_d$.


In fact, it is still remained an open problem whether there exists an imperfectly secure 
code over active attack on the one-hop relay network
when the message set ${\cal X}$ is $\FF_{2^l}$ with $l \ge 2$.
This is an interesting future study.
Also, the obtained analysis depends on the property of 
the single transmission scheme.
As shown in the paper \cite{CH}, under the setting of transmission of multiple letters, 
there exists a secure code to satisfy the vector linearity.
Since our analysis is limited to the one-hop relay network,
it is a challenging future study to investigate 
a similar security analysis over a multiple-layered network.

Due to our analysis, a secure code on our model is limited to 
non-linear code to satisfy the conditions (B1') and (B2').
Clearly, as mentioned in the end of Subsection \ref{S3-1},
this type of codes do not satisfy the robustness when 
Eve contaminates the information on the attacked edges.
Hence, to realize the robustness as well as the secrecy,
we need to increase the number of edges between nodes.
Since the robustness is another important issue in secure network coding,
it is another interesting future study to the secrecy and the robustness jointly over a modified network model.

\section*{Acknowledgments}
MH and NC are very grateful to 
Prof. Masaki Owari, Dr. Go Kato, and 
Dr. Wangmei Guo
for helpful discussions and comments.

\appendix

First, we calculate $I(M;Y_1,Y_3)$ and $d_1(M|Y_1,Y_3)$.
We find that
\begin{align*}
P_{Y_1,Y_3|M}(0,0|0)=
P_{Y_1,Y_3|M}(1,0|0) =\frac{1}{2},\\
P_{Y_1,Y_3|M}(0,0|1)=
P_{Y_1,Y_3|M}(1,1|1)=\frac{1}{2},
\end{align*}
where the remaining conditional probabilities are zero.
Hence, 
\begin{align*}
H(Y_1,Y_3|M)=1, ~
H(Y_1,Y_3)=\frac{1}{2}\log 2 +\frac{1}{2}\log 4 =\frac{3}{2},
\end{align*}
which implies $I(M;Y_1,Y_3)=\frac{1}{2}$.

Since
\begin{align*}
P_{Y_1,Y_3}(0,0)=\frac{1}{2},~
P_{Y_1,Y_3}(1,0)=
P_{Y_1,Y_3}(1,1)=\frac{1}{4}, 
\end{align*}
we have
\begin{align*}
P_{M|Y_1,Y_3}(0|0,0)&=
P_{M|Y_1,Y_3}(1|0,0)=\frac{1}{2},\\
P_{M|Y_1,Y_3}(0|1,0)&=
P_{M|Y_1,Y_3}(1|1,1)=1,
\end{align*}
where the remaining conditional probabilities are zero.
Therefore,
\begin{align*}
&d_1(M|Y_1,Y_3) \\
= &
\Big|\frac{1}{2}P(0,0)-\frac{1}{2}P(0|0,0)\Big|
+\Big|\frac{1}{2}P(1,0)-\frac{1}{2}P(0|1,0)\Big| \nonumber \\
&+\Big|\frac{1}{2}P(1,1)-\frac{1}{2}P(0|1,1)\Big| 
+\Big|\frac{1}{2}P(0,0)-\frac{1}{2}P(1|0,0)\Big|\nonumber \\
&
+\Big|\frac{1}{2}P(1,0)-\frac{1}{2}P(1|1,0)\Big|
+\Big|\frac{1}{2}P(1,1)-\frac{1}{2}P(1|1,1)\Big|\nonumber \\
=&0+\frac{1}{8}+\frac{1}{8}
+0+\frac{1}{8}+\frac{1}{8}
=
\frac{1}{2}.
\end{align*}
Replacing $M$ and $L$ by $M+1$ and $L+1$, respectively,
we can calculate $I(M;Y_1,Y_4)$ and $d_1(M|Y_1,Y_4)$
in the same way.

Next, we consider 
$I(M;Y_2,Y_3)$ and $d_1(M|Y_2,Y_3)$.
We find that
\begin{align*}
P_{Y_2,Y_3|M}(0,0|0)=
P_{Y_2,Y_3|M}(1,0|0) =\frac{1}{2},\\
P_{Y_2,Y_3|M}(0,1|1)=
P_{Y_2,Y_3|M}(1,0|1)=\frac{1}{2},
\end{align*}
where the remaining conditional probabilities are zero.
Hence, replacing $(0,0)$ and $(1,1)$ by $(1,0)$ and $(0,1)$, respectively, in the above
derivation, we can show 
$I(M;Y_2,Y_3)=\frac{1}{2}$
and $d_1(M|Y_2,Y_3)=\frac{1}{2}$.
Finally, replacing $M$ and $L$ by $M+1$ and $L+1$, respectively,
we can calculate $I(M;Y_2,Y_4)$ and $d_1(M|Y_2,Y_4)$ in the same way.

\end{document}